\newtheorem{definition}{Definition} [section]
\newtheorem{theorem}{Theorem}  [section]
\def\BibTeX{{\rm B\kern-.05em{\sc i\kern-.025em b}\kern-.08em
    T\kern-.1667em\lower.7ex\hbox{E}\kern-.125emX}}
\begin{document}

\title{An Uncertainty-  and Collusion-Proof  Voting Consensus Mechanism in Blockchain\\
}
\author{
\IEEEauthorblockN{Shengling Wang \IEEEauthorrefmark{1}, Xidi Qu \IEEEauthorrefmark{1}, Qin Hu \IEEEauthorrefmark{2}, Weifeng Lv(Corresponding Author)\IEEEauthorrefmark{3}}
\IEEEauthorblockA{
\IEEEauthorrefmark{1}\textit{School of Artificial Intelligence, Beijing Normal University, Beijing, China}\\
Email:{wangshengling@bnu.edu.cn, sindychanson@mail.bnu.edu.cn} \\
\IEEEauthorrefmark{2}\textit{Department of Computer and Information Science, Indiana University-Purdue University Indianapolis, Indiana, USA} \\
Email:qinhu@iu.edu \\
\IEEEauthorrefmark{3}\textit{School of Computer Science and Engineering, Beihang University, Beijing, China} \\
Email:lwf@nlsde.buaa.edu.cn
}}

\maketitle

\begin{abstract}
Though voting-based consensus algorithms in Blockchain outperform  proof-based ones   in energy- and transaction-efficiency, they are prone to incur wrong elections and bribery elections. The former originates from the uncertainties of candidates' capability and availability; and the latter comes from the egoism of voters and candidates.   Hence, in this paper, we propose an uncertainty- and collusion-proof voting consensus mechanism, including the selection pressure-based voting consensus algorithm and the  trustworthiness evaluation algorithm.  The first algorithm can decrease the side effects of candidates' uncertainties, lowering wrong elections while trading off the balance between efficiency and fairness in electing miners. The second algorithm adopts an  incentive compatible scoring rule to evaluate the  trustworthiness of voting, motivating voters to report true beliefs on candidates by making egoism in  consistent with altruism so as to avoid bribery elections. A salient feature of our work is theoretically analyzing the proposed voting consensus mechanism by the large deviation theory. Our analysis provides not only the voting failure rate of a candidate
but also its decay speed, based on which the concepts of {\it the effective selection valve} and {\it  the effective expectation of merit} are introduced to help the system designer to determine the optimal voting standard and guide a candidate to behave in an optimal way for lowering the voting failure rate.
\end{abstract}

\section{Introduction}
\label{into}
Featured with decentralization, transparency,
immutability and global inclusiveness,  Blockchain is breeding extensive novel applications and fields, pulling us from Internet
of information to {\it Internet of value}.  So far, the market size of Blockchain is around USD 1,907.8 million, which is estimated to reach USD 7,684 million by 2022, at a compound annual growth rate of 79.6\% \cite{Markets17}. Essentially,  Blockchain is an event\footnote{ When Blockchain is applied in the financial sector, the transaction is a typical event.}-driven  deterministic state machine, which calls for consensus algorithms to agree on the order of deterministic events and screen out invalid events. The mainstream consensus algorithms in Blockchain are proof of work (PoW)\cite{Nakamoto08} and proof of stake (PoS)\cite{King12}.

PoW allocates the accounting rights and rewards through the competition among nodes to solve cryptographic puzzles in the form of
a hash computation by brute-forcing. PoW obviously incurs a serious low-efficiency issue due to the immense scale of electricity wasting and long transaction confirmation delay.  Yet, such a way of sacrificing efficiency does not bring fairness even though it is the original design intention of PoW.
This is because mining one block is so hard that extensive individual miners tend to gather their hashing powers to form {\it mining pools} for seeking the solution of PoW puzzles together. Thus, the accounting rights and rewards are gradually concentrated in several super mining pools, leading a solo miner hardly has the opportunity to participate in the decision-making of Blockchain. This unfairness makes complete decentralization cannot be realized through PoW.

As a  power-saving  alternative to PoW, PoS
 introduces a concept of the coin age,
 which is the
value of the coin multiplied by the time period after the coin was created. PoS sets the difficulty of mining to be inversely proportional to the
coin age. Thus,  the consensus in Blockchain is no longer
completely relying on the proof of work, effectively addressing the problem of
electricity wasting. However, PoS still has a limitation in improving efficiency since it does not significantly lower the transaction confirmation delay. Not only that, PoS leads to more unfairness in that the rich miners are bound to be dominant in the Blockchain network.

Different from PoW and PoS where all nodes participate in the decision making of Blockchain,  delegated PoS (DPoS) \cite{Larimer14} leverages the voting power of stakeholders to select qualified nodes to mine, who will receive rewards when creating correct and timely blocks.
 Since no competition among miners in mathematical computations, DPoS outperforms PoW and  PoS in energy- and transaction-efficiency. However, DPoS suffers two issues: 1) {\it wrong election}. Due to the information asymmetry,  a voter is hardly fully aware of a candidate's behavior. When the candidate mines precariously or behaves maliciously, he\footnote{In this paper, we use ``he" and ``she" to differentiate the candidate and the voter.} will be voted out at the end of one round. However, such an ex-post countermeasure cannot compensate for the loss of the voters caused by their wrong selection in this round; 2) {\it bribery election}.  Though the voting mechanism seems democratic,  it can easily ruin fairness in practice since voters are also stakeholders, which makes room for the voters and candidates to collude, leading to bribery election.

In this paper, we propose an uncertainty-  and collusion-proof voting consensus mechanism, which inherits the advantages of DPoS in terms of energy- and transaction-efficiency, but has no trouble in  wrong and bribery elections.  It is challenging to realize the proposed voting consensus mechanism in that 1) the uncertainty leading to wrong elections is dual: one from unknown candidates and the other from familiar ones. In detail,  lacking knowledge on the mining ability of unknown candidates, a voter may miss professional miners or mistakenly vote incompetent candidates as qualified ones.  Even, the voter may still make wrong decisions when facing candidates who  had been selected before due to their uncertain behaviors,  such as
becoming unavailable because of electricity shortage or  malicious
attacks; 2)
the real intention of voting (egoism or altruism) is the private information of a voter, which makes it easy to hide collusion between voters and candidates, implying that bribery elections are prone to take place.

To tackle the first challenge,  we propose a selection pressure-based voting consensus algorithm. In ecology, the selection pressure is used to characterize the advantage that a creature's traits are chosen to survive. We introduce this concept in the proposed voting consensus algorithm to depict the urgency of a candidate being chosen. The selection pressure is designed to elect miners with high availability and less frequency of being chosen before. Such a design has two merits: 1) it can not only reduce side effects from uncertain behaviors of known candidates but also explore unfamiliar or even completely unknown candidates, decreasing wrong elections due to the low cognitive level of global information; 2) it can trade off the balance between efficiency and fairness in electing miners.

To address the second challenge, we adopt a  scoring rule for evaluating the trustworthiness of voting. Such a rule subjects to the incentive compatibility (IC) constraint, so that a rational and intelligent voter finds she  can obtain a higher score when telling the truth rather than lying. Thus, the motivation of collusion between the voter and the candidate is naturally eliminated, implying that bribery elections can be prevented.    An attractive nature of the trustworthiness evaluation algorithm is that there is no need to submit any private information. Instead, a voter only needs to report  her opinion of  any other random peer voter's prior and posterior
 beliefs on candidates.
  The reason behind such a design is each voter's subjective belief comes from the same Blockchain system, which affects every voter in the same way. Hence, the subjective belief of any voter is symmetric with regard to that of any other voter.

A salient feature of our work is theoretically analyzing  the proposed  voting consensus mechanism by taking advantage of the large deviation theory. The main contributions and results of this part are summarized as follows:
\begin{itemize}
  \item  We provide not only the voting failure rate of a candidate  but also its decay speed. Such a  fine-grained  analysis reveals that to slow down the growth rate of the voting failure rate,
lowering the selection criteria is more effective than increasing the selection pressure of a candidate.
  \item  A concept of the effective selection valve is introduced to describe the lowest selection criteria that can  guarantee the voting failure rate of a candidate  no more than a given tolerance
degree. Through solving the effective selection valve,  the optimal  number of super nodes (i.e., the winners in the voting) can be determined for the system
designer to meet the requirement of the voting failure
rate, essentially ensuring the fairness of the
Blockchain system.
        \item  A concept of the effective
expectation of merit  is introduced as the minimum expected merit a candidate brought to the voter, which makes his voting failure rate no more than
a given tolerance degree. The value of this metric quantitatively answers a question:
how much effort should a candidate devote to become a super node?
\end{itemize}

The remaining part of the paper proceeds as follows:
In Section \ref{sec:formulation}, we present an overview of our proposed voting consensus mechanism. The main part of the mechanism, named selection pressure-based voting consensus algorithm, is elaborated in Section \ref{sec:pressure}, which is further enhanced with the voting trustworthiness evaluation in Section \ref{sec:trustworthiness}. We conduct an experimental evaluation in Section \ref{sec:experiment} to illustrate the effectiveness of our mechanism, and summarize the most related work in Section \ref{sec:related}. The whole paper is concluded in Section \ref{sec:conclusion}.

\section{Overview of Our  Voting Consensus Mechanism}
\label{sec:formulation}

In this section, an overview of our proposed uncertainty-  and collusion-proof voting consensus mechanism is introduced, where $N$ voters are required select $K$ super nodes  from $M$ candidates as representatives to mine.  Since a node obtains the accounting right through the voting rather than   the competition with other nodes in terms of computing power, the proposed mechanism inherits the advantage of  DPoS in saving energy  and speeding up  transactions. However, such a voting-based consensus mechanism  may suffer the issues of wrong and bribery elections as mentioned in Section \ref{into}, pressing a need to find solutions.

To avoid bribery  and wrong elections, we employ {\it score}  to evaluate the  capability  of any candidate to be  a super node. $K$ nodes with the highest scores are selected as super nodes. The score  $S_j^k$ of any   candidate $j$ in the $k^{th}$ round of voting  can be  calculated  as
\begin{equation}\label{score}
S_j^k= \sum_{i=1}^ N s_{i} \times t_{ij}^k \times c_{ij}^k,\ \ \ j=1,2,\cdots,M.
\end{equation}
In \eqref{score},  $c_{ij}^k \in\{0,1\}$ is an indicator identifying whether voter $i$  chooses candidate $j$ as a super node in round $k$, satisfying $\sum_{j=1}^{M}{c_{ij}^k\leq K}$, which is a key parameter needed to be controlled strategically to refrain from wrong selection. 
$t_{ij}^k$ represents the  trustworthiness of votes that voter $i$ casts to $j$ in round $k$, reflecting the degree that voter $i$ chooses candidate $j$ in light of his block mining capability rather than other factors, which is designed to avoid bribery election. The larger the value of  $t_{ij}^k$ is, the more reliable $i$ chooses $j$, leading to his high score. Finally, $s_{i}$ is the stake of voter $i$, indicating her voting weight or speaking power  in voting process.

In the proposed uncertainty-  and collusion-proof voting consensus mechanism,  as shown in Fig. \ref{chart},  to reduce wrong selections,  we propose the selection pressure-based voting  consensus algorithm to determine the optimal $c_{ij}^k$ ($i=1,2,\cdots,N; \  j=1,2,\cdots,M$), combined with which, the  voting trustworthiness $t_{ij}^k$ can be calculated through the evaluation algorithm to  hold back bribery elections. We will detail the above two algorithms in the following sections.

\begin{figure}[ht]
\centerline{
\includegraphics[width=9cm]{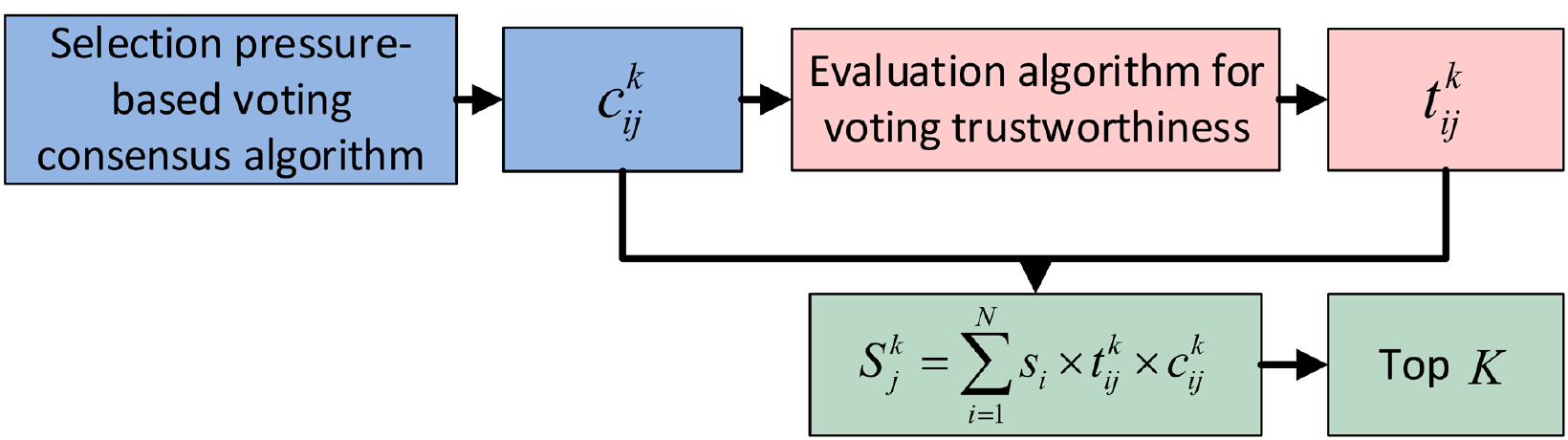}}
\caption{Overview of the proposed voting consensus mechanism.}
\label{chart}
\end{figure}

\section{Selection Pressure-Based Voting  Consensus Algorithm and Analysis}
\label{sec:pressure}
In this section, we will introduce the proposed selection pressure-based voting  consensus algorithm and then take advantage of the large deviation theory \cite{ganesh2004big} to analyze it.
\subsection{Selection Pressure-Based Voting Consensus Algorithm}
In the proposed  selection pressure-based voting consensus algorithm, the selection pressure  $F_{ij}^t$ of each candidate $j$  ($j=1,2,\cdots,M$) in the $t^{th}$ round of voting is formulated as follow
\begin{equation}\label{wij}
F_{ij}^t=M_{ij}^{t-1}-C_{ij}^{t-1}, \ t=2,3,\cdots
\end{equation}
In \eqref{wij}, $M_{ij}^{t-1}=m_{ij}^1+m_{ij}^2+\cdots+m_{ij}^{t-1}$ and $C_{ij}^{t-1}=c_{ij}^1+c_{ij}^2+\cdots+c_{ij}^{t-1}$, where $m_{ij}^{k} \geq 0$ is the  merit of candidate $j$ to voter $i$ in round $k$ ($ k=1,2,\cdots,t-1$) and  $c_{ij}^k\in \{0,1\}$  is whether candidate $j$ is selected by voter $i$ in round $k$ as mentioned above.

  The proposed voting consensus algorithm elects the top $K$ nodes with the biggest $F_{ij}^t$ for voter $i$ to  determine $c_{ij}^t$, which can trade off the balance between  efficiency and  fairness in electing super nodes.  In detail,  the bigger  $F_{ij}^t$  implies the higher $M_{ij}^{t-1}$ and the lower $C_{ij}^{t-1}$. The greater the  merit  $M_{ij}^{t-1}$ of candidate $j$, the more the contribution  he had made to voter $i$ until round $t-1$. Selecting candidates with the higher $M_{ij}^{t-1}$ can
improve the profit of voter $i$ in the future. In addition, the lower $C_{ij}^{t-1}$ means  the fewer historical selections are made on $j$. Electing candidates with the lower $C_{ij}^{t-1}$ can reflect the fairness of the voter in the voting process, which also
 facilitates  exploring unfamiliar or completely unknown candidates. Thus, through trying to have a more comprehensive understanding of the overall information about candidates, voters can make rational and intelligent elections.

In the following, we will detail how to calculate $m_{ij}^k$ ($i=1,2,\cdots,N; \  j=1,2,\cdots,M; \ k=1,2,\cdots$) in \eqref{wij}. Since  $m_{ij}^k$ means the merit of candidate $j$ to voter $i$ in round $k$, it is affected by two factors: the availability of candidate $j$ and the profit he brought to voter $i$ in round $k$, indicated by $R_{ij}^k$.  Obviously,
a candidate with high historical contribution and low unavailability  is more valuable to a voter. Hence, $m_{ij}^k$ can be calculated as
\begin{equation}\label{aijt}
m_{ij}^k=\rho{d(u_j)}+{R_{ij}^k}.
\end{equation}
In \eqref{aijt}, $\rho$ is a scaling  parameter. $d(u_j)$ is a function indicating the availability of candidate $j$, where $u_j \in [0,1]$  is the unavailable probability of candidate $j$.  The larger $u_j$, the smaller $d(u_j)$. In practice, $u_j$ can be estimated  by the total number of unavailable times divided by the total number of times that candidate $j$ being elected. Obviously, the method using the availability of a candidate as an important metric for evaluating his merit  can effectively reduce the negative impact of the candidate's uncertain behaviors on  voting, lowering the probability of wrong election.

\subsection{Analysis Based on the Large Deviation Theory}
In this subsection, we take advantage of the large deviation theory to analyze the performance of the proposed voting consensus algorithm. To that aim, we create a virtual queue $Q_{ij}^t$ ($i=1,2,\cdots,N; \  j=1,2,\cdots,M$) as follows,
\begin{equation}\label{nwij}
Q_{ij}^t=C_{ij}^{t-1}-M_{ij}^{t-1}, \ t=2,3,\cdots.
\end{equation}
It is easy to find that $Q_{ij}^t=-F_{ij}^t$, implying that
the larger $F_{ij}^t$, the smaller $Q_{ij}^t$.
Since the proposed voting consensus algorithm elects the top $K$ nodes with the biggest $F_{ij}^t$,
$Q_{ij}^t$ actually characterizes the ranking of candidate $j$ for being elected. A candidate with a  large $F_{ij}^t$ has a high urgency to be elected, thus he ranks high, implying he has short virtual queue length. Hence, we name the virtual queue $Q_{ij}^t$ the {\it ranking queue}. A node with short length of ranking queue has high chance  to be elected.

Since any candidate $j$ being elected in each round subjects to the Binomial distribution, $C_{ij}^{t-1}$ can be assumed as a Possion process. Similarly, the merit of candidate $j$ to voter $i$ in each round depends on whether he is elected, which is an event obeying  the Binomial distribution as just mentioned, implying that $M_{ij}^{t-1}$ can also be viewed as a  Possion process.  We assume that $\lambda$ and $\Lambda$ are respectively the expectations of $M_{ij}^{t-1}$ and $C_{ij}^{t-1}$. To guarantee the stability of the  virtual queue $Q_{ij}^t$, $\lambda > \Lambda$ should hold, and thus  $\rho$ in \eqref{aijt} should satisfy $\rho> \frac{\Lambda(1-R)}{d(u_j)}$, where $R$ is the maximum of $R_{ij}^t$.

For any candidate,  his main concern is under what conditions will he fail in the election.
This is also the key in analyzing  a voting consensus algorithm. To that aim,
we study the metric $L_{ij}=\sup_{t\geq 2}Q_{ij}^t$, which  is the longest ranking queue length of candidate $j$  when $t\rightarrow \infty $, reflecting his historic lowest selection pressure.  Considering each voter can select  up to $K$ super nodes in our proposed voting consensus algorithm, we have the following definition:
\begin{definition}[Selection valve]
\label{def:valve}
The selection valve $L$ is the ranking queue length of a candidate whose selection pressure is ranked $K^{th}$.
\end{definition}

Obviously, when $L_{ij}>L$, candidate $j$ may fail in the election. Otherwise, he will be elected by voter $i$. To analyze such an event, we define the following concept:
\begin{definition}[Voting failure rate]
\label{def:loss}
The voting failure rate indicates the probability that any candidate $j$ is not elected by voter $i$ when his  historic  longest  ranking queue $L_{ij}$   is longer than the selection valve $L$, i.e., $P(L_{ij}>L)$.
\end{definition}

 In the following, we will  analyze the distribution of $P(L_{ij}>L)$.
Let $L=lb$ for $b>0$. In light of the Cramer's theorem \cite{ganesh2004big},  when $l\rightarrow \infty$, we have
\begin{equation}\label{loss}
\lim _{l\rightarrow \infty}\frac{1}{l}\log {P(L_{ij}>lb)}= -I(b),
\end{equation}
with
\begin{equation}\label{IB}
I(b)=\inf_{t \geq 2}t\psi ^*(\frac{b}{t}),
\end{equation}
where
\begin{equation}\label{psi}
\psi ^*(x)=\sup_{\theta \in{ \mathbb{R}}}\left \{\theta {x} - \psi(\theta)\right \}.
\end{equation}
In \eqref{psi}, $\psi ^*(x)$ is the convex conjugate or the Legendre transformation of $\psi(\theta )$ which is the cumulant generating function of $Q_{ij}^t$ and can be expressed as
\begin{equation}\label{psii}
\begin{split}
\psi(\theta) &=\lim _{t\rightarrow \infty}\frac{1}{t}\log { \textbf{E}} \left[ e^{\theta (Q_{ij}^t)} \right] \\
&=\lim _{t\rightarrow \infty}\frac{1}{t}\log { \textbf{E}} \left[ e^{\theta ({C_{ij}^{t-1}-M_{ij}^{t-1}})} \right] \\
&= \lim_{t \to \infty}\frac{1}{t}\log \textbf{E}[ \mathrm e ^{\theta C_{ij}^{t-1}} ]\textbf{E}[ \mathrm e ^{-\theta M_{ij}^{t-1}} ] \notag \\
 & =  \lim_{t \to \infty} \frac{1}{t}\log \textbf{E}[ \mathrm e ^{\theta C_{ij}^{t-1}} ] + \lim_{t \to \infty} \frac{1}{t}\log \textbf{E}[ \mathrm e ^{-\theta M_{ij}^{t-1}} ] \notag \\
& =\psi _{C_{ij}^{t-1}}(\theta)+\psi _{M_{ij}^{t-1}}(-\theta).
\end{split}
\end{equation}
In the above equation, $\psi _{C_{ij}^{t-1}}(\cdot)$ and $\psi _{M_{ij}^{t-1}}(\cdot)$  are respectively the cumulant generating functions of $C_{ij}^{t-1}$ and $M_{ij}^{t-1}$.
 Because the cumulant generating function for any Poisson process $X$ can be calculated as $\log \textbf{E}[e^{\theta {X}}]=\textbf{E}{X}(e^\theta-1)$,
we have
\begin{equation}
\psi _{C_{ij}^{t-1}}(\theta)=\Lambda(e^\theta-1),
\end{equation}
and
\begin{equation}
\psi _{M_{ij}^{t-1}}(\theta)=\lambda(e^\theta-1).
\end{equation}
Thus, \eqref{psi} can be transferred to
\begin{equation}\label{npsi}
\psi ^*(x)=\sup_{\theta \in{ \mathbb{R}}}\left \{\theta {x} - \Lambda(e^\theta-1)-\lambda(e^{-\theta}-1)\right \}.
\end{equation}

 In the following, we will discuss how to solve \eqref{npsi} so as to obtain $I(b)$ in \eqref{loss}.  In fact, we only need to consider the case of $x>0$ in \eqref{npsi} because of $\frac{b}{t}>0$. Let $G(\theta)= \theta {x} - \Lambda(e^\theta-1)-\lambda(e^{-\theta}-1)$. For $\frac{\partial^{2} {G(\theta)}}{\partial {\theta^{2}}}=- \Lambda e^\theta-\lambda e^{-\theta}<0$,  $\theta^*$ that maximizes  $G(\theta)$ can be calculated with the condition $\frac{\partial {G(\theta)}}{\partial {\theta}}=0$. So,
 \begin{equation}
\theta^*=\log \frac{x+\sqrt{x^2+4\lambda\Lambda}}{2\Lambda}.
\end{equation}

Hence, we have
\begin{align}\label{eq:onehand111}
\psi^{*}(x)=& x\log \frac{x+\sqrt{x^2+4\lambda\Lambda}}{2\Lambda}-\frac{x+\sqrt{x^2+4\lambda\Lambda}}{2} \nonumber \\
   & - \frac{2\Lambda\lambda}{x+\sqrt{x^2+4\Lambda\lambda}}+\lambda+\Lambda.
 \end{align}
Let $x=b/t$ in \eqref{eq:onehand111}, we can obtain $\inf_{t \geq 2}t\psi ^*(\frac{b}{t})$ at $t=b/(\lambda-\Lambda)$.

Thus, we have
\begin{equation}
I(b)=b\log{\frac{\lambda}{\Lambda}}.
\end{equation}
$I(b)$ is the rate function of $P(L_{ij}>L)$ which slumps   exponentially with the selection valve $L$. Hence, $I(b)$  indicates the decay speed of the voting failure rate.
The larger $I(b)$  is, the slower the growth rate of the voting failure rate. Otherwise, the growth rate of the voting failure rate will leap, suggesting that
candidate $j$ should improve his availability as much as possible to increase his selection pressure so as to shorten $L_{ij}$ for decreasing $P(L_{ij}>L)$.

Fig. \ref{lab5}  illustrates the impacts of  $b$ and $\lambda-\Lambda$ on $I(b)$, i.e., the decay rate of the voting failure rate. It can be found that $I(b)$ increases as  either $b$ or $\lambda-\Lambda$  goes up. Considering that $L=lb$  and $\lambda-\Lambda$ is the expectation of $F_{ij}^t$, we can draw the first conclusion: no matter lifting the selection valve $L$  or enhancing the selection pressure of a candidate, the growth rate of his  voting failure rate can fall off. Furthermore,  it can be found that the decay rate $I(b)$ increases linearly with $b$ while creeps up logarithmically with  $\lambda-\Lambda$. Hence, our second conclusion is  the change of the voting failure rate is more sensitive to the selection valve rather than the selection pressure. Hence, to reduce the voting failure rate, the most effective way is to lift the selection valve so as to increase
the
vacancies of super nodes. However, more super nodes' mining in the system incurs more cost. Therefore, it is worth to study how to determine a proper selection valve to trade off the balance of the voting failure rate and the mining cost. To that aim, we introduce the following concept:
\begin{definition}[Effective selection valve]
\label{def:effective_b}
The effective selection valve $L^*(\epsilon)$ is the  minimum  selection valve that makes the voting failure rate of a candidate $j$ no more than a given threshold  $\epsilon \in (0,1]$, which is called the voting failure tolerance degree. That is
$$
L^*(\epsilon)=\min{\left\{L:P(L_{ij}>L)\leq \epsilon\right\}}.
$$
\end{definition}

Facing any candidate $j$ whose expected merit to voter $i$ and expected number of times being elected by voter $i$ are respectively $\lambda$ and $\Lambda$, meaning that his expected selection pressure is $\lambda-\Lambda$, the effective selection valve of voter $i$ can be calculated by taking advantage of the following theorem:
\newtheorem{thm}{\bf Theorem}[section]
\begin{thm}\label{thm2}
$
L^*(\epsilon)=-\log \epsilon/\log{\frac{\lambda}{\Lambda}},
$
given $\lambda>\Lambda>0$ and $\epsilon \in \left(0,1 \right]$.
\end{thm}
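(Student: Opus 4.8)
The plan is to leverage the large deviation estimate already derived, namely \eqref{loss} together with the explicit rate function $I(b)=b\log\frac{\lambda}{\Lambda}$, and then invert this relationship to solve for the smallest admissible selection valve required by Definition \ref{def:effective_b}.

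First I would recall that $L=lb$ and rewrite the large deviation limit \eqref{loss} in exponential form. Since $\lim_{l\to\infty}\frac{1}{l}\log P(L_{ij}>lb)=-I(b)$, the leading-order behavior of the voting failure rate is $P(L_{ij}>L)\approx e^{-l I(b)}$. Substituting $I(b)=b\log\frac{\lambda}{\Lambda}$ and using $lb=L$ gives $l I(b)=lb\log\frac{\lambda}{\Lambda}=L\log\frac{\lambda}{\Lambda}$, so that $P(L_{ij}>L)\approx e^{-L\log\frac{\lambda}{\Lambda}}$. This collapses the two separate parameters $l$ and $b$ into the single quantity $L$, which is precisely the variable over which Definition \ref{def:effective_b} optimizes.

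Next I would impose the tolerance constraint $P(L_{ij}>L)\leq\epsilon$, i.e. $e^{-L\log\frac{\lambda}{\Lambda}}\leq\epsilon$. Because the hypothesis $\lambda>\Lambda>0$ forces $\log\frac{\lambda}{\Lambda}>0$, the left-hand side is strictly decreasing in $L$; hence $\{L:P(L_{ij}>L)\leq\epsilon\}$ is an upper half-line whose minimum is attained at the boundary where equality holds. Taking logarithms of $e^{-L\log\frac{\lambda}{\Lambda}}=\epsilon$ yields $-L\log\frac{\lambda}{\Lambda}=\log\epsilon$, and solving for $L$ produces $L^*(\epsilon)=-\log\epsilon/\log\frac{\lambda}{\Lambda}$, the claimed expression. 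The restriction $\epsilon\in(0,1]$ ensures $\log\epsilon\leq 0$, so $L^*(\epsilon)\geq 0$ is well defined.

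The main subtlety I anticipate is justifying the passage from the asymptotic large deviation limit to a workable equality for $P(L_{ij}>L)$. The Cramér estimate \eqref{loss} only pins down the exponential decay rate as $l\to\infty$, so $e^{-L\log\frac{\lambda}{\Lambda}}$ is really the dominant exponential term of the failure rate rather than its exact value. I would therefore state the result at the level of the large deviation exponent, interpreting $L^*(\epsilon)$ as the selection valve for which this dominant term equals the tolerance $\epsilon$; the monotonicity argument then makes the minimum well posed. A fully non-asymptotic version would require matching upper and lower bounds on $P(L_{ij}>L)$ (for instance a Chernoff bound in the upper direction), but the exponent-level identity is exactly what the subsequent determination of the optimal number of super nodes relies on.
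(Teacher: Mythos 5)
Your argument follows the same route as the paper's own proof: both treat the Cram\'er asymptotic $\log P(L_{ij}>L)\sim -lI(b)$ as the operative estimate, substitute $I(b)=b\log\frac{\lambda}{\Lambda}$ together with $L=lb$, and solve the constraint $P(L_{ij}>L)\le\epsilon$ at equality to obtain $L^*(\epsilon)=-\log\epsilon/\log\frac{\lambda}{\Lambda}$. Your additional remarks --- the monotonicity of $e^{-L\log\frac{\lambda}{\Lambda}}$ in $L$ justifying that the minimum sits on the boundary, and the caveat that the identity holds only at the level of the large deviation exponent --- are worthwhile refinements the paper leaves implicit, but they do not change the approach.
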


\begin{proof}
We have $\log {P(L_{ij}>L)}\sim {-lI(b)}$ when $l\rightarrow \infty$. In light of $P(L_{ij}>L)\leq \epsilon$, we have $I(b)\geq -\frac{\log \epsilon}{l}$. Combined with $I(b)=b\log{\frac{\lambda}{\Lambda}}$, we can derive the value of $L^*(\epsilon)$.
\end{proof}

Fig. \ref{lab6} illustrates that the effective selection valve $L^*(\epsilon)$  decreases logarithmically with the increase of the voting failure tolerance degree  $\epsilon$, implying that the reduced voting failure rate requirement makes voter $i$ have more vacancies of super nodes.  When $\epsilon$ is fixed, as the expectation of the selection pressure $\lambda-\Lambda$ rises up,  $L^*(\epsilon)$   decreases  logarithmically, which also can make sure there are enough qualified candidates to be super nodes. The selection valve $L$ is defined as the
ranking queue length of a candidate whose selection pressure
is ranked $K^{th}$, where $K$ is the number of super nodes that voters should elect. Therefore, there exists an optimal value of $K$, namely $K^*(\epsilon)$ corresponding to $L^*(\epsilon)$,  which can be actually used to guarantee the voting failure rate of a candidate is not higher than  the voting failure tolerance degree.  So $K^*(\epsilon)$  is a
valuable parameter for the voting consensus algorithm designer to guarantee the requirement of the voting failure rate of candidates, essentially ensuring the fairness of the Blockchain system.

However, when $K$ is fixed, i.e.,  the vacancies of super nodes cannot be changed,  a direct way for candidate $j$  to reduce his voting failure rate is to
improve his competitiveness, i.e., the expectation of his merit $\lambda$.
 So we introduce the following definition:
\begin{definition}[Effective expectation of  merit]
\label{def:effective_merit}
The effective expectation of  merit  $\lambda^*(\epsilon)$  is the  minimum merit  expectation of a candidate that makes his voting failure rate no more than the voting failure tolerance degree  $\epsilon \in (0,1]$. That is,
$$
\lambda^*(\epsilon)=\min{\left\{\lambda:P(L_{ij}>L)\leq \epsilon\right\}}.
$$
\end{definition}

Taking advantage of the similar method, we can prove the following theorem:

\begin{thm}\label{thm3}
Given $\Lambda>0$ and $\epsilon \in \left(0,1 \right]$, $$\lambda^*(\epsilon)=\Lambda e^{-\frac{\log \epsilon}{L}}.$$
\end{thm}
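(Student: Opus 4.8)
The plan is to mirror the proof of Theorem \ref{thm2}, but now to treat the selection valve $L$ as a fixed given quantity and solve the voting-failure constraint for the expected merit $\lambda$ rather than for $L$. The starting point is again the large-deviation asymptotic established in \eqref{loss}, namely $\log P(L_{ij}>L)\sim -l\,I(b)$ as $l\rightarrow\infty$, together with the closed form of the rate function $I(b)=b\log\frac{\lambda}{\Lambda}$ derived just before Definition \ref{def:effective_b}. Both ingredients are already available, so the argument is essentially algebraic reorganization of the same inequality chain, but with $\lambda$ as the free variable.

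First I would impose the defining constraint $P(L_{ij}>L)\le\epsilon$ from Definition \ref{def:effective_merit}. Taking logarithms and using the asymptotic equivalence gives $-l\,I(b)\le\log\epsilon$, i.e. $I(b)\ge-\frac{\log\epsilon}{l}$. Substituting $I(b)=b\log\frac{\lambda}{\Lambda}$ and recalling $L=lb$, the factor $l$ recombines with $b$ into $L$, and the constraint collapses to $L\log\frac{\lambda}{\Lambda}\ge-\log\epsilon$ with $L$ now held fixed. Solving this single inequality for $\lambda$ yields $\log\frac{\lambda}{\Lambda}\ge-\frac{\log\epsilon}{L}$, hence $\lambda\ge\Lambda\,e^{-\frac{\log\epsilon}{L}}$, so the smallest admissible value is exactly $\lambda^*(\epsilon)=\Lambda\,e^{-\frac{\log\epsilon}{L}}$.

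The one point that deserves care---the main, if modest, obstacle---is justifying that the minimum in Definition \ref{def:effective_merit} is attained at the boundary of the feasible set, i.e. that $P(L_{ij}>L)$ is monotonically decreasing in $\lambda$ so that the constraint $P(L_{ij}>L)\le\epsilon$ is genuinely equivalent to $\lambda\ge\lambda^*(\epsilon)$. This holds because $I(b)=b\log\frac{\lambda}{\Lambda}$ is increasing in $\lambda$ for $\lambda>\Lambda$, so a larger expected merit produces a faster decay and hence a smaller failure probability; the infimum over feasible $\lambda$ therefore sits precisely where $P(L_{ij}>L)=\epsilon$. I would close with the consistency check that, since $\epsilon\in(0,1]$ forces $\log\epsilon\le0$ and thus $e^{-\log\epsilon/L}\ge1$, the resulting $\lambda^*(\epsilon)\ge\Lambda$ respects the stability requirement $\lambda>\Lambda$ imposed on the ranking queue $Q_{ij}^t$, confirming that the derived value is admissible.
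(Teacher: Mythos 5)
Your derivation is correct and is precisely the ``similar method'' the paper alludes to: the paper gives no explicit proof of Theorem~\ref{thm3}, only stating that it follows by the same argument as Theorem~\ref{thm2}, and your chain $P(L_{ij}>L)\le\epsilon \Rightarrow I(b)\ge -\log\epsilon/l \Rightarrow L\log\frac{\lambda}{\Lambda}\ge-\log\epsilon \Rightarrow \lambda\ge\Lambda e^{-\log\epsilon/L}$ is exactly that intended derivation. Your added monotonicity and admissibility checks ($\lambda^*(\epsilon)\ge\Lambda$, consistent with the queue-stability condition) go slightly beyond what the paper records but do not change the route.
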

Fig. \ref{lab7} shows given the fixed $\Lambda$, the increase of either the  voting failure tolerance degree  $\epsilon$ or the selection valve $L$ will lead to the drop of  the effective expectation of  merit $\lambda^*(\epsilon)$, which will remain stable after  declining  to a certain level. The reason behind this fact is   that the merit of a candidate  includes two parts:  his availability and his profit brought to the voter. Even if the availability can be reduced responding to the reduced requirement ($\epsilon$) and the increased vacancies of super nodes ($L$), the profit that the candidate can bring to the voter cannot be decreased due to his fixed expectation number of times being elected (i.e., $\Lambda$). Thus, the effective expectation of  merit $\lambda^*(\epsilon)$ will not reduce to zero but stabilize to the expectation of profit brought to the voter.

\begin{figure}[t]
\centering
\subfigure[]{
\begin{minipage}{0.4\linewidth}
\includegraphics[width=4cm]{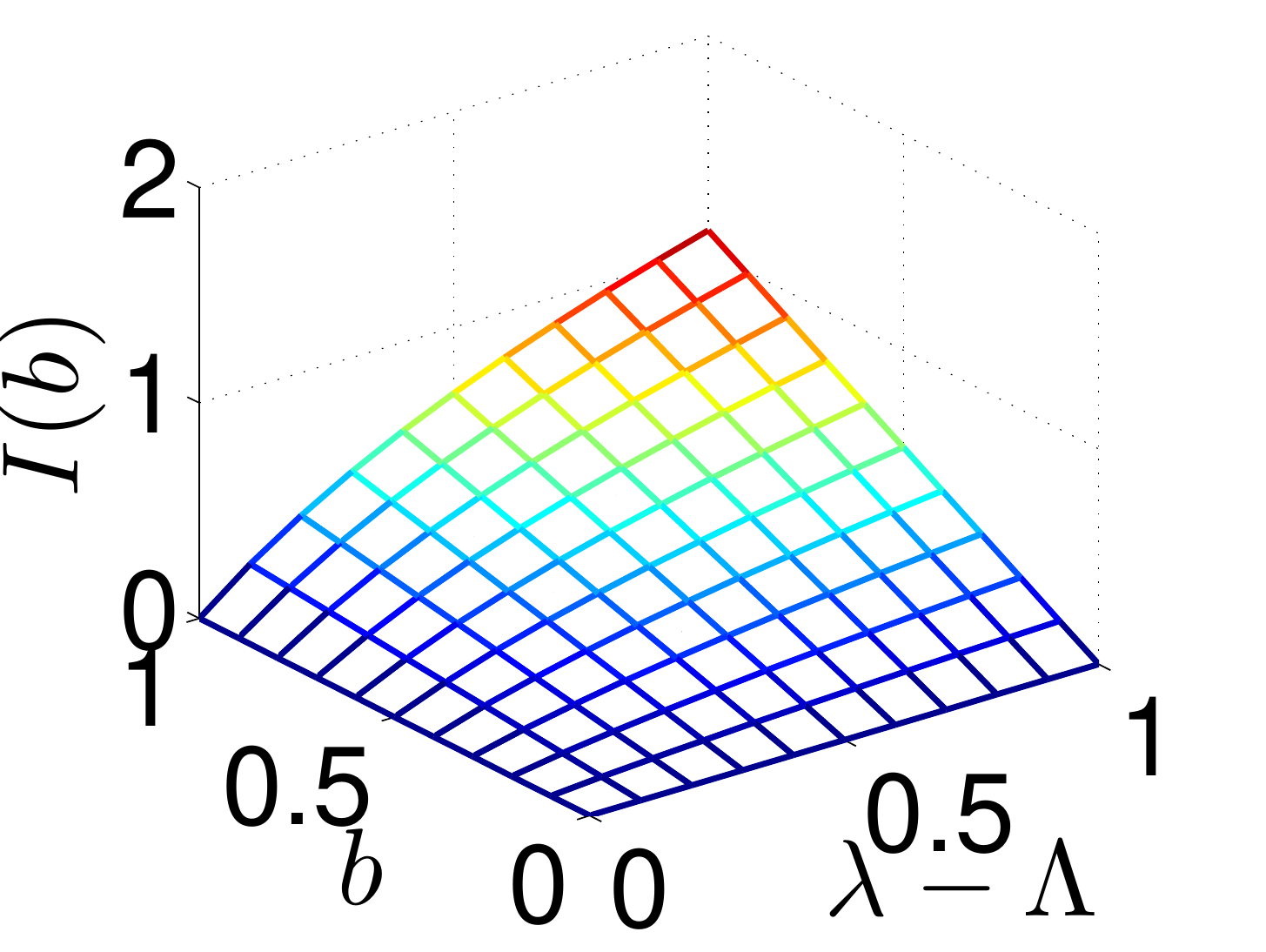}
\end{minipage}
}
\subfigure[]{
\begin{minipage}{0.22\linewidth}
\includegraphics[width=2cm]{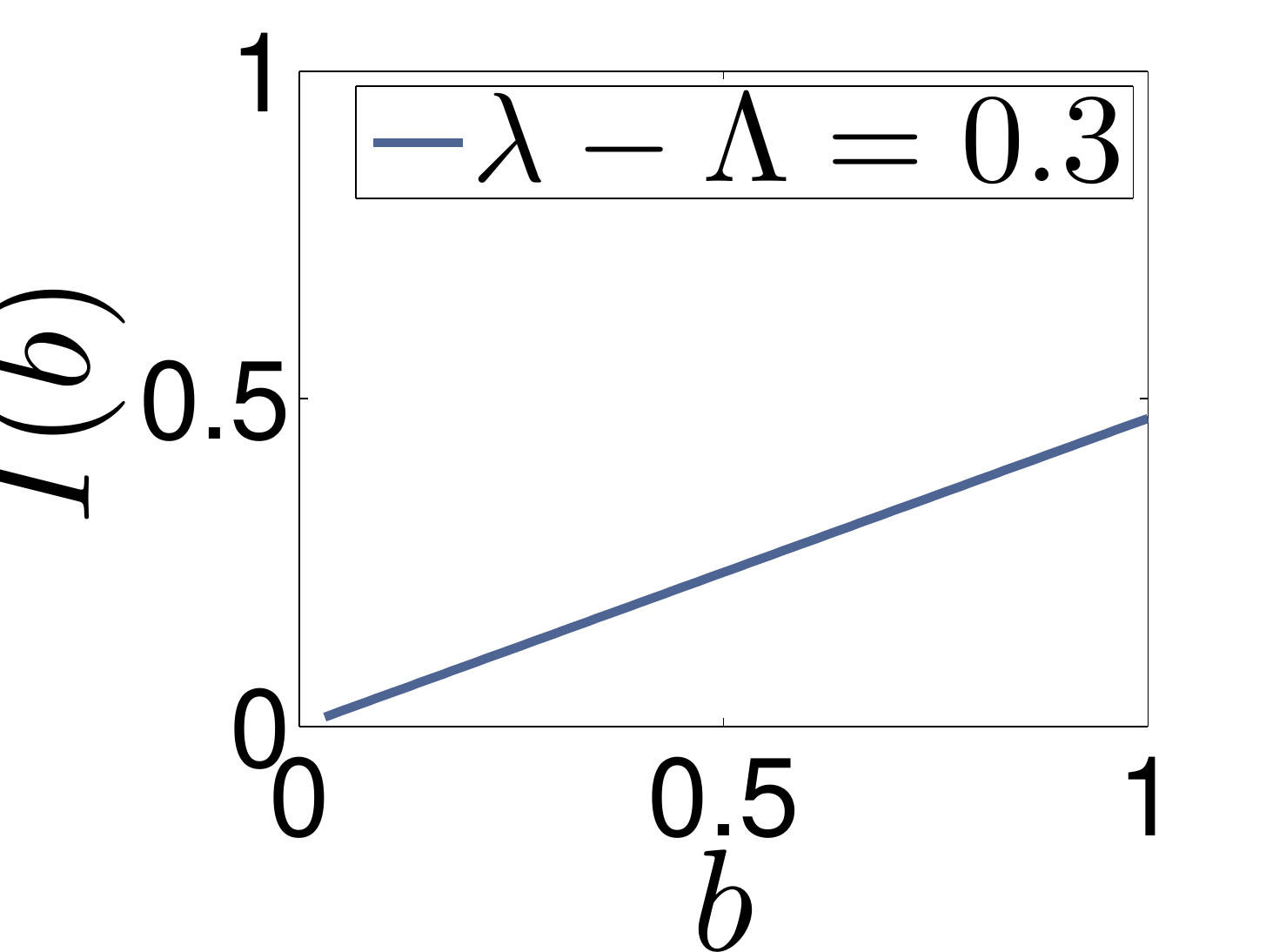}\\
\includegraphics[width=2cm]{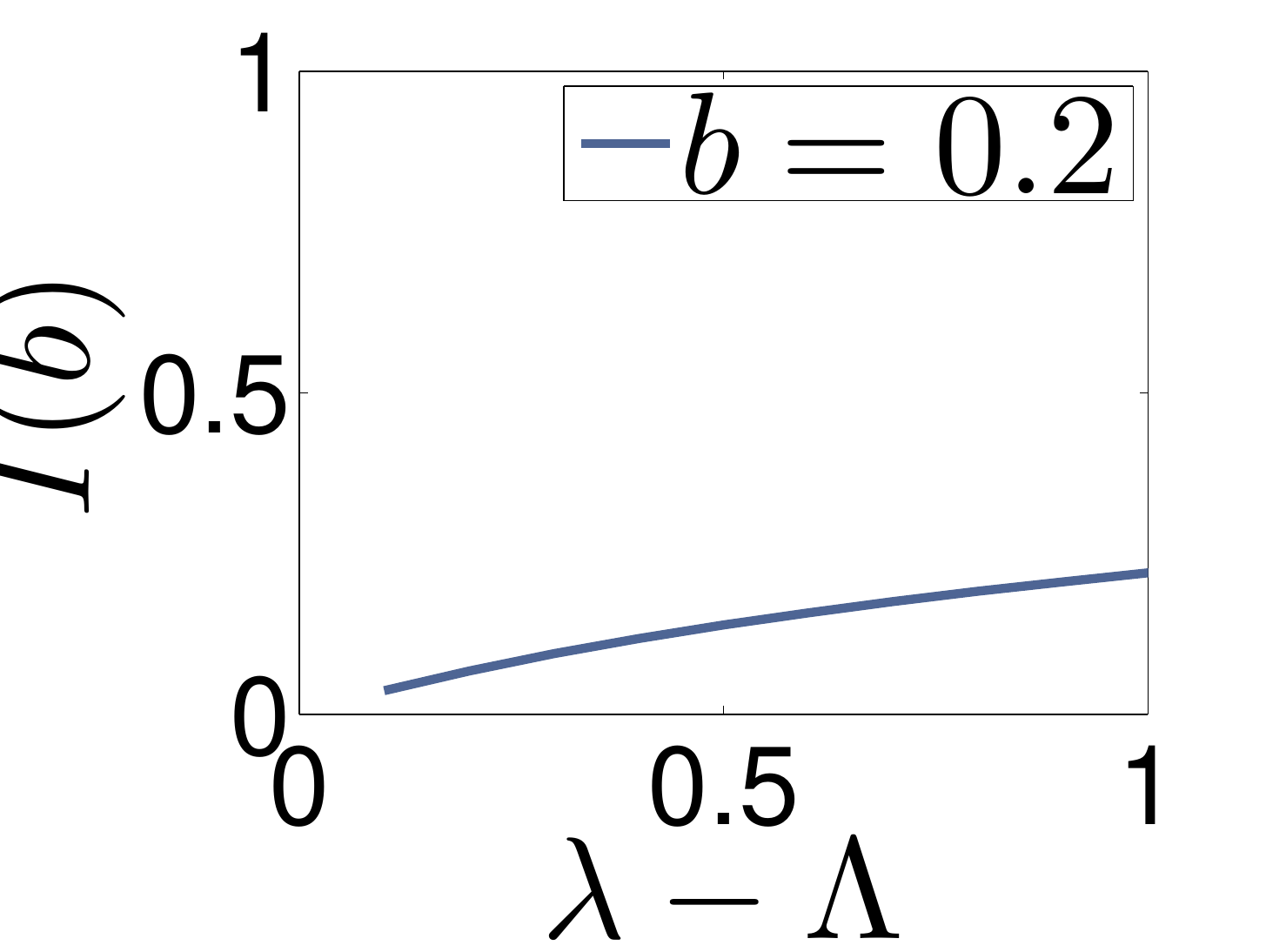}
\end{minipage}

\begin{minipage}{0.22\linewidth}
\includegraphics[width=2cm]{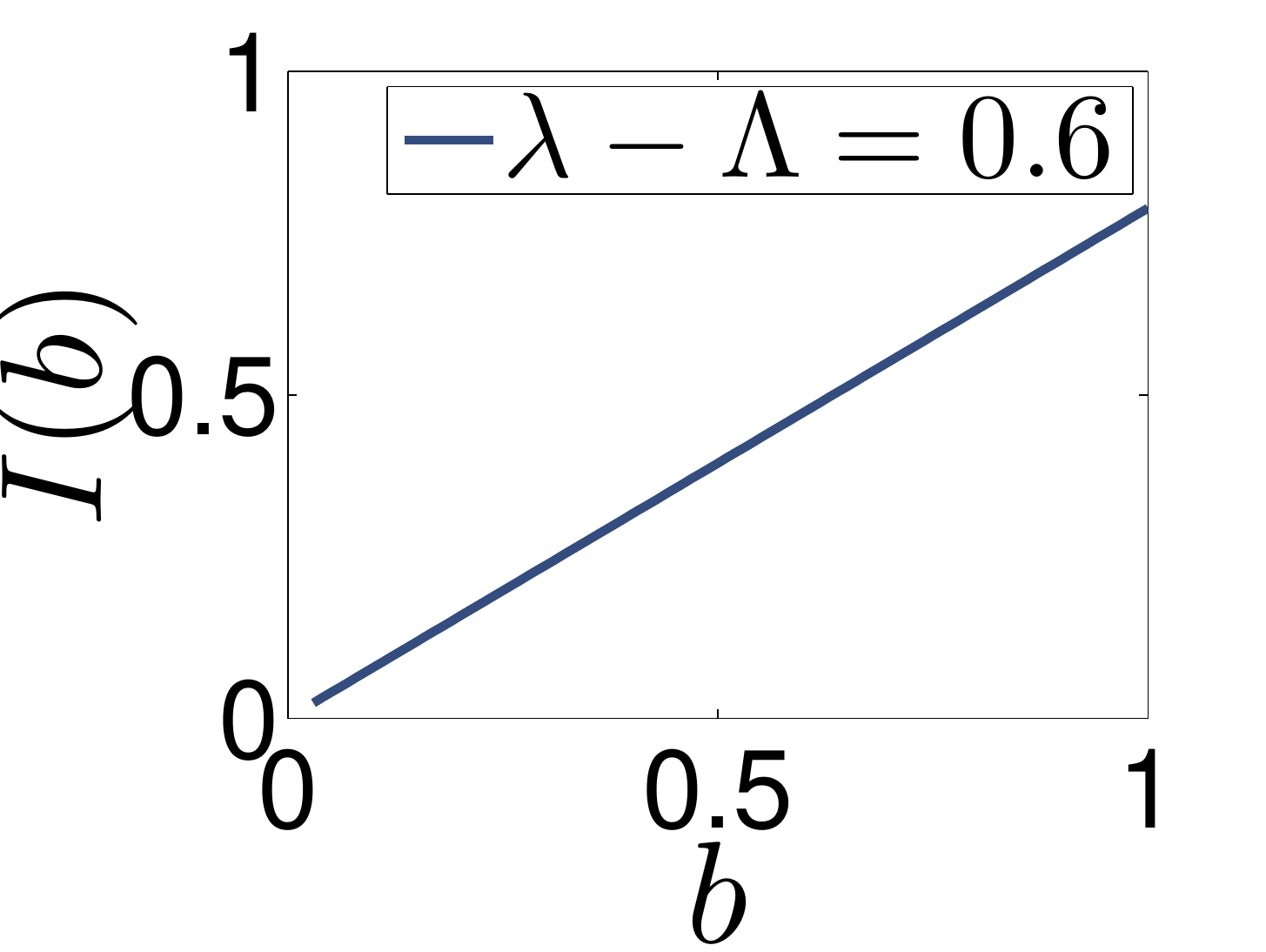}\\
\includegraphics[width=2cm]{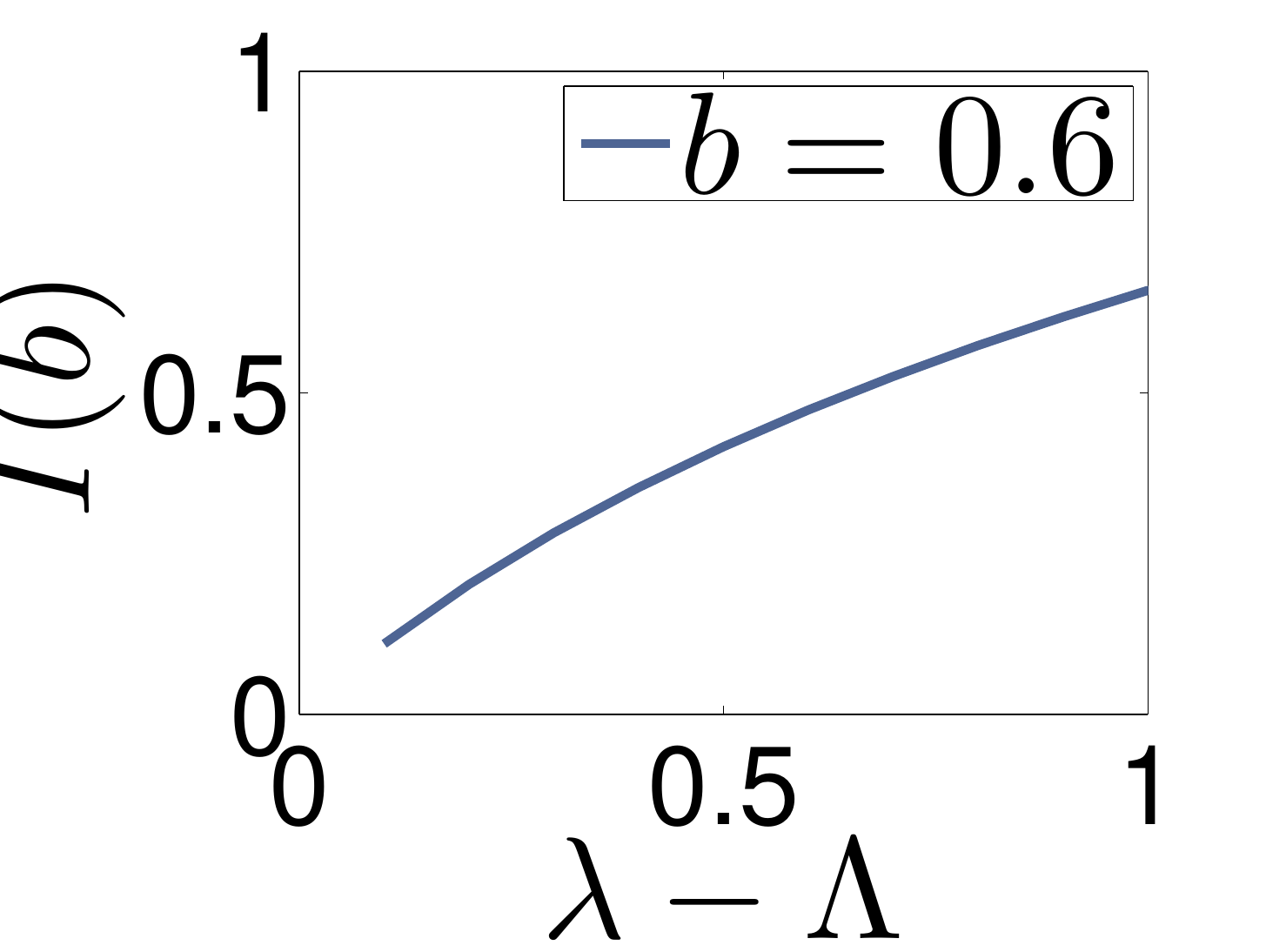}
\end{minipage}
}
\centering
\caption{Decay rate of the voting failure rate.}
\label{lab5}
\end{figure}

\begin{figure}[t]
\centering
\subfigure[]{
\begin{minipage}{0.4\linewidth}
\includegraphics[width=4cm]{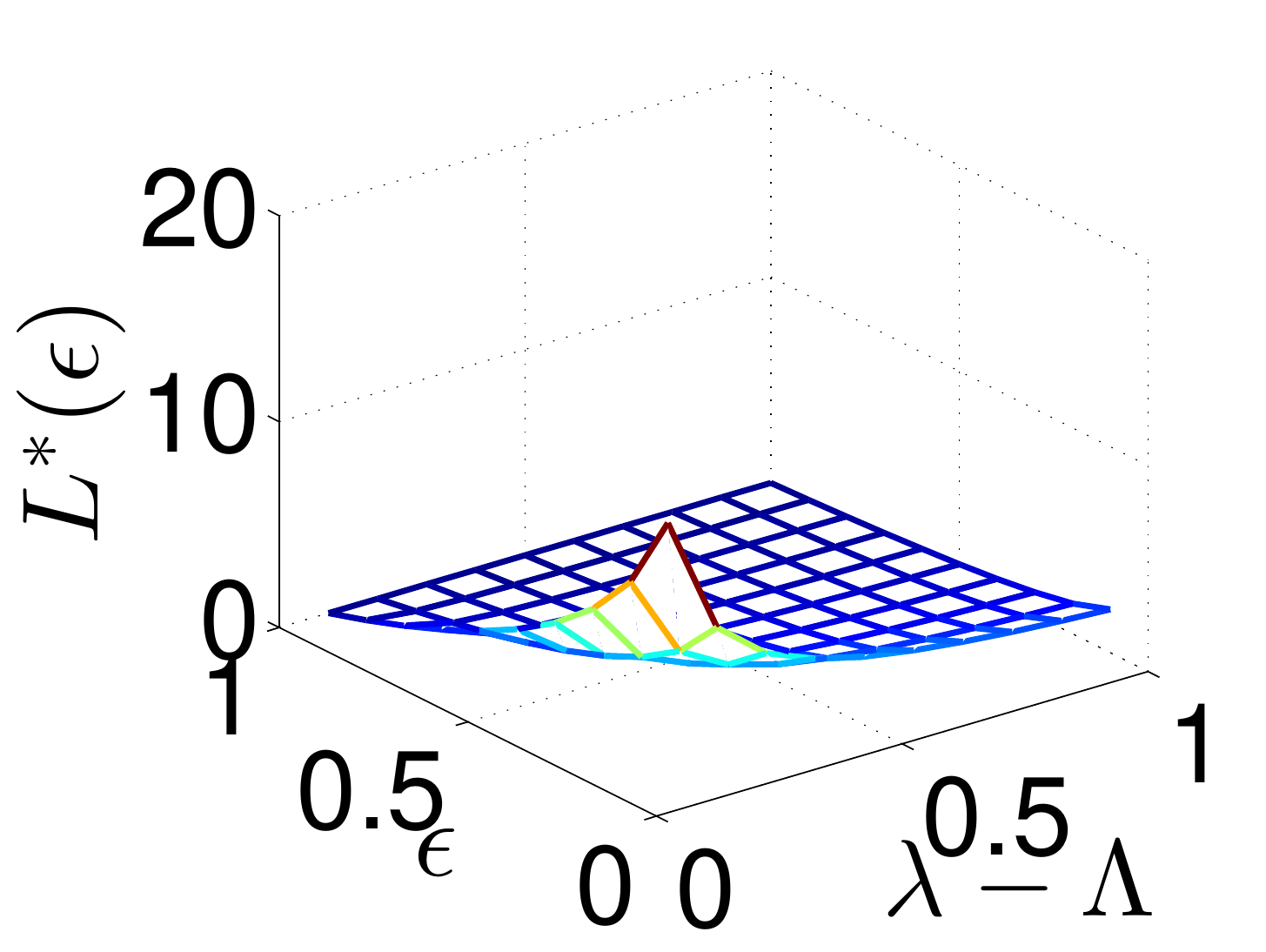}
\end{minipage}
}
\subfigure[]{
\begin{minipage}{0.22\linewidth}
\includegraphics[width=2cm]{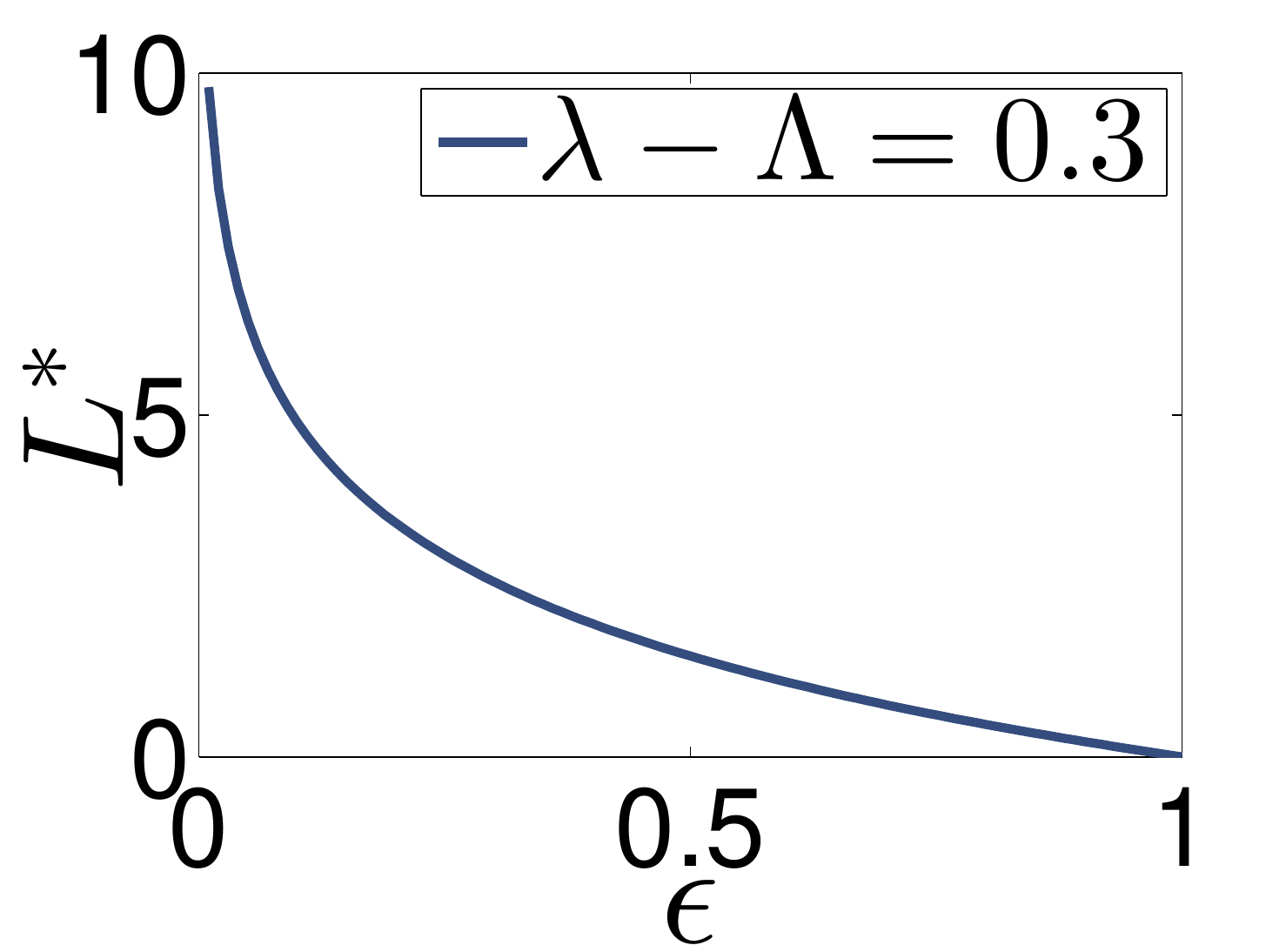}\\
\includegraphics[width=2cm]{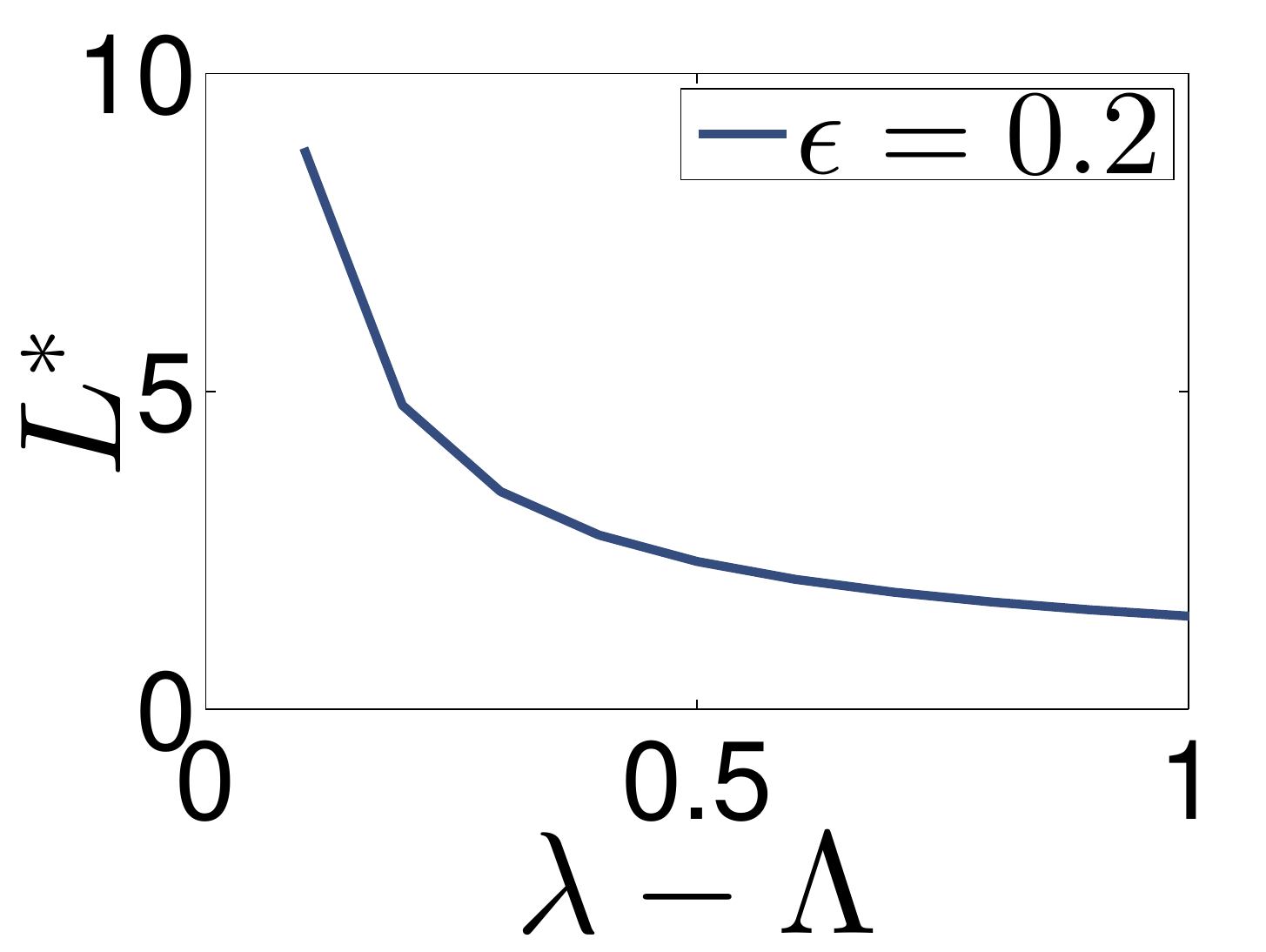}
\end{minipage}

\begin{minipage}{0.22\linewidth}
\includegraphics[width=2cm]{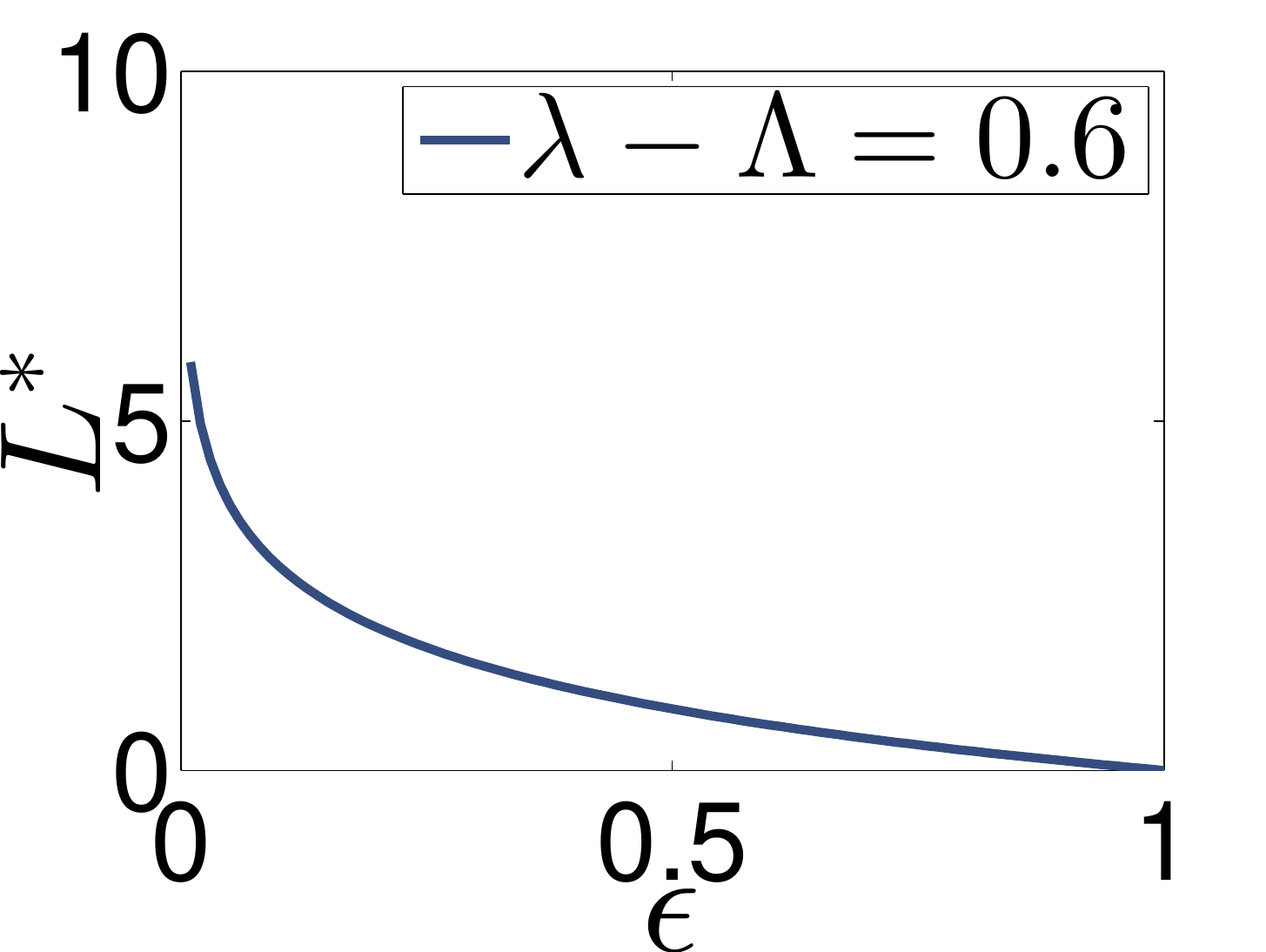}\\
\includegraphics[width=2cm]{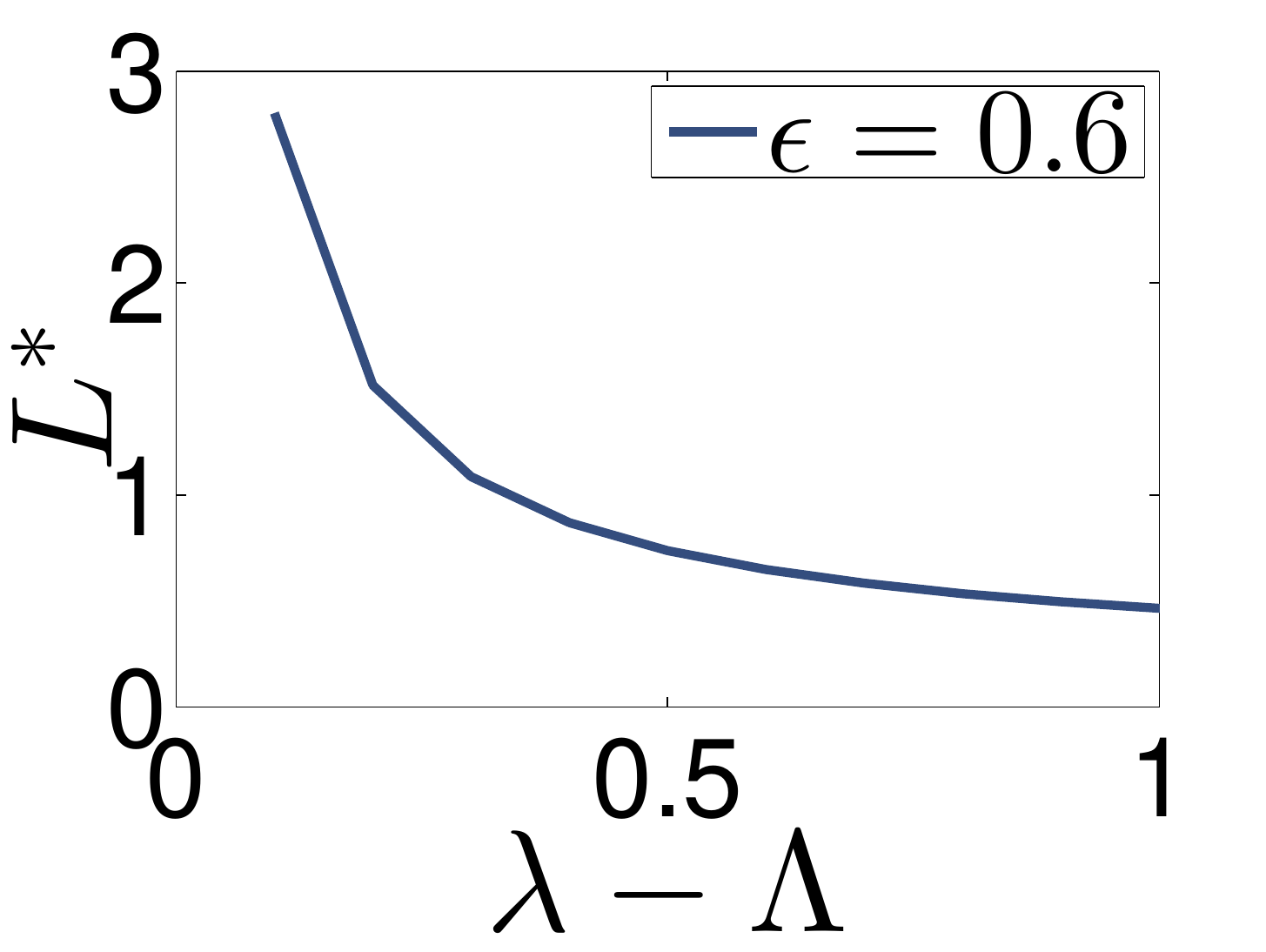}
\end{minipage}
}
\centering
\caption{Effective selection valve.}
\label{lab6}
\end{figure}

\begin{figure}[t]
\centering
\subfigure[]{
\begin{minipage}{0.4\linewidth}
\includegraphics[width=4cm]{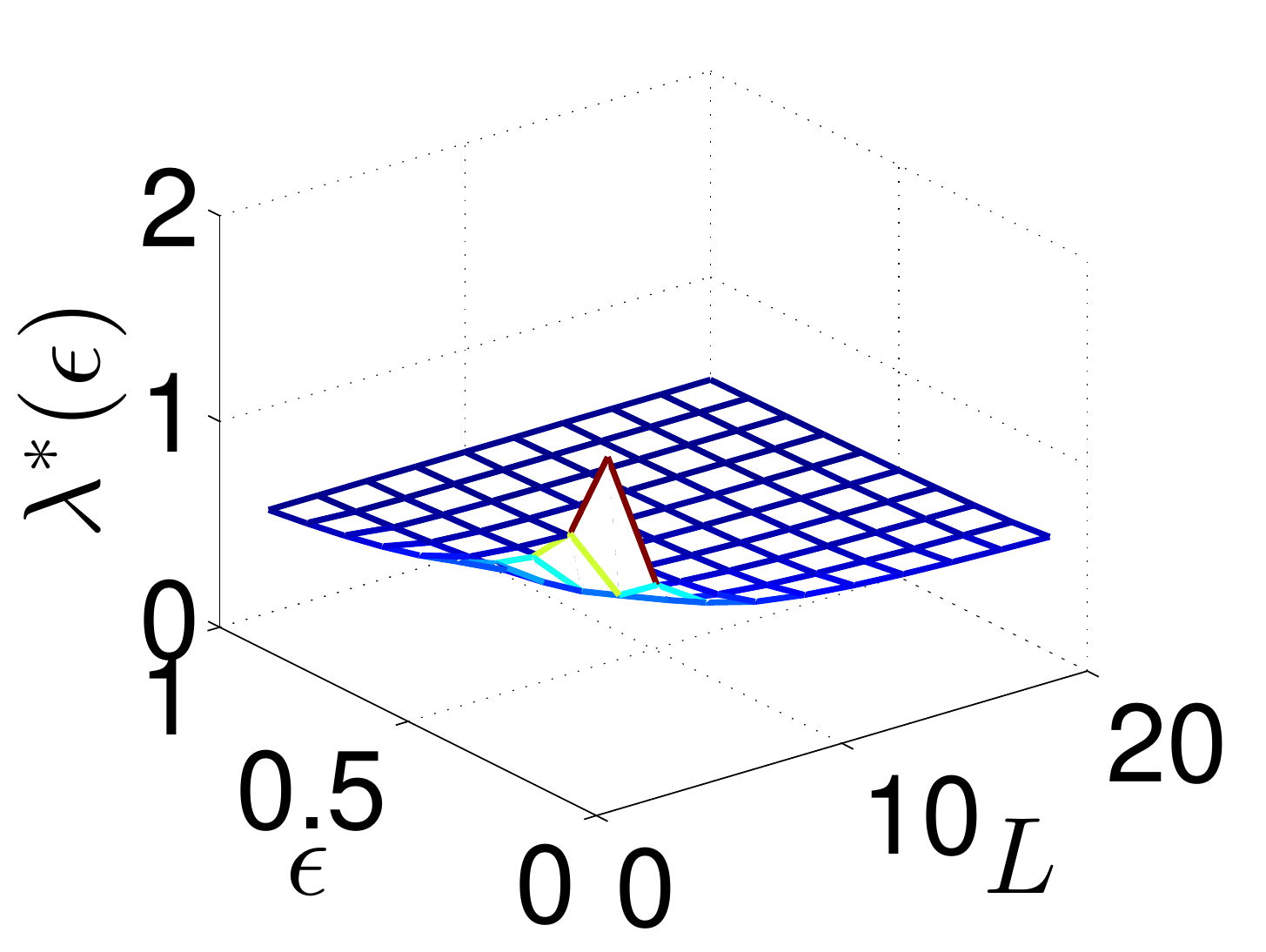}
\end{minipage}
}
\subfigure[]{
\begin{minipage}{0.22\linewidth}
\includegraphics[width=2cm]{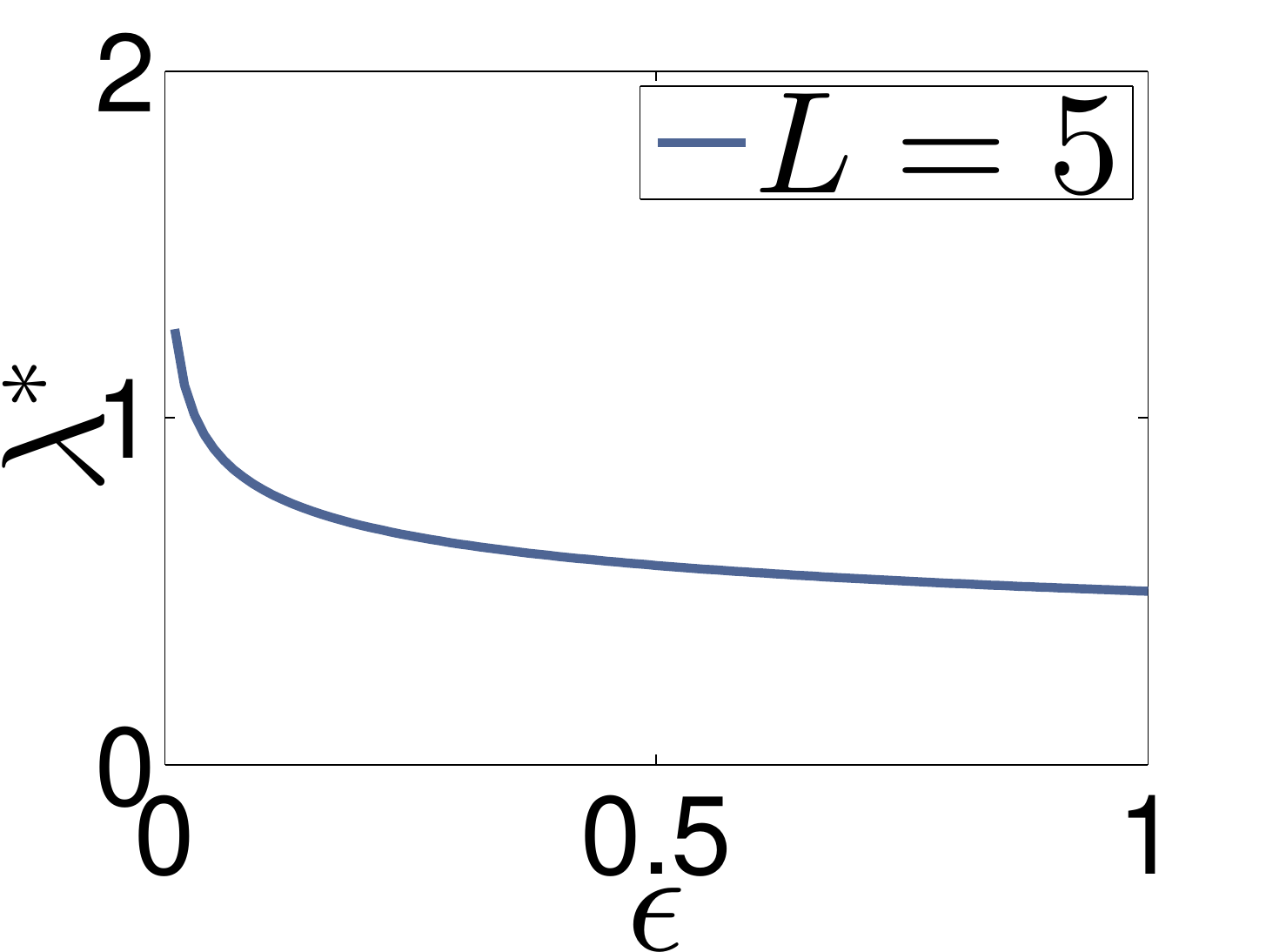}\\
\includegraphics[width=2cm]{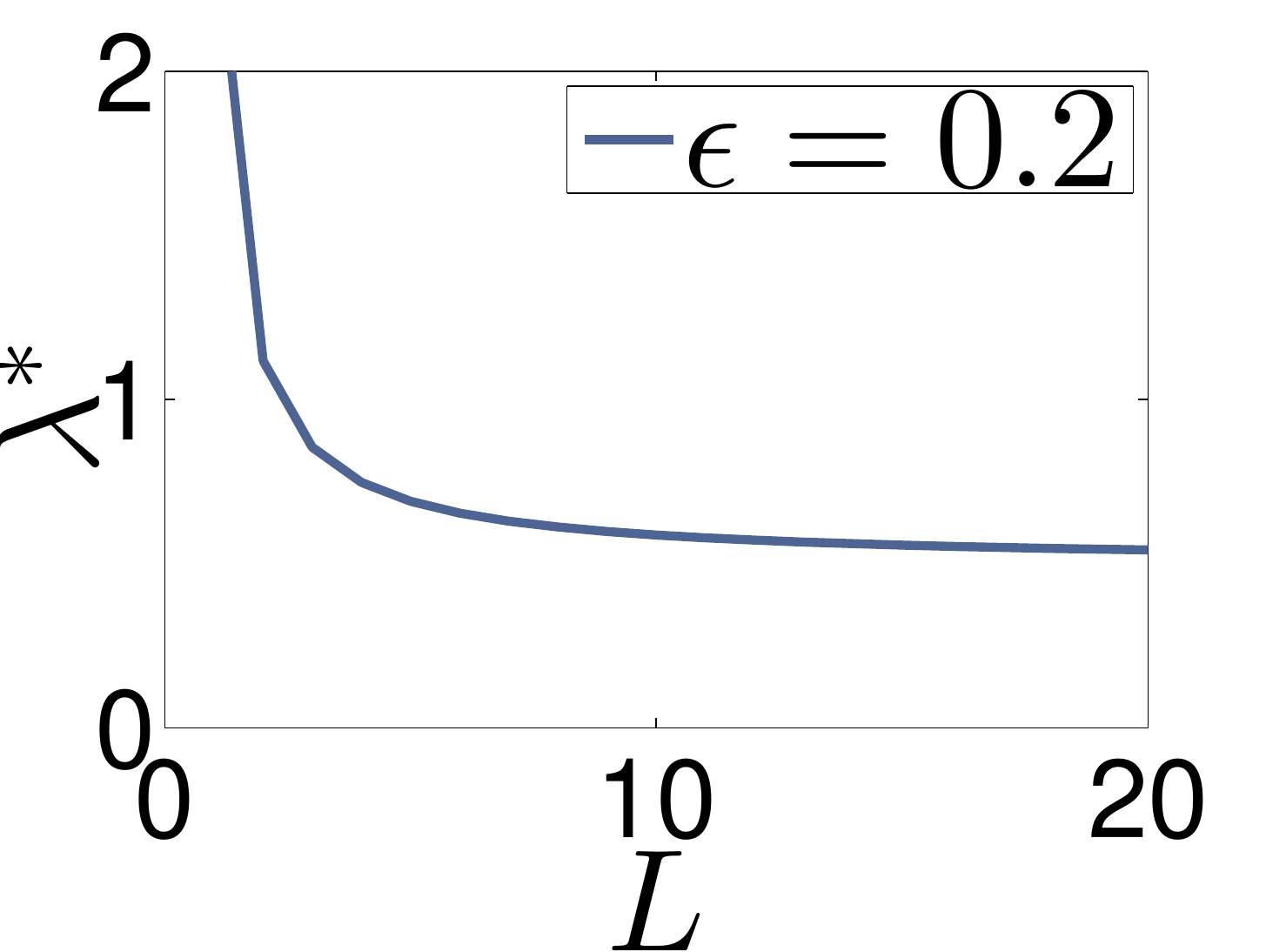}
\end{minipage}

\begin{minipage}{0.22\linewidth}
\includegraphics[width=2cm]{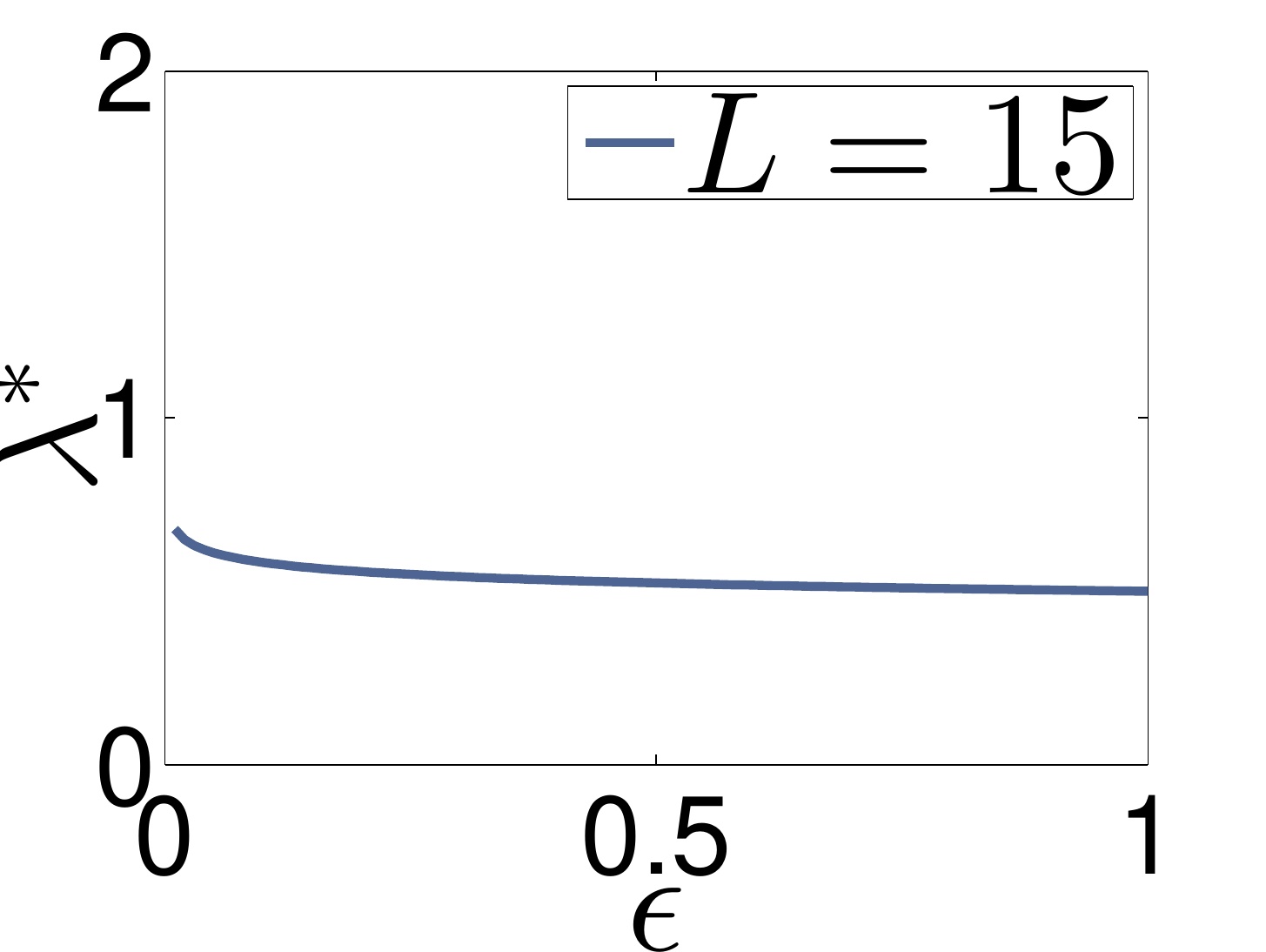}\\
\includegraphics[width=2cm]{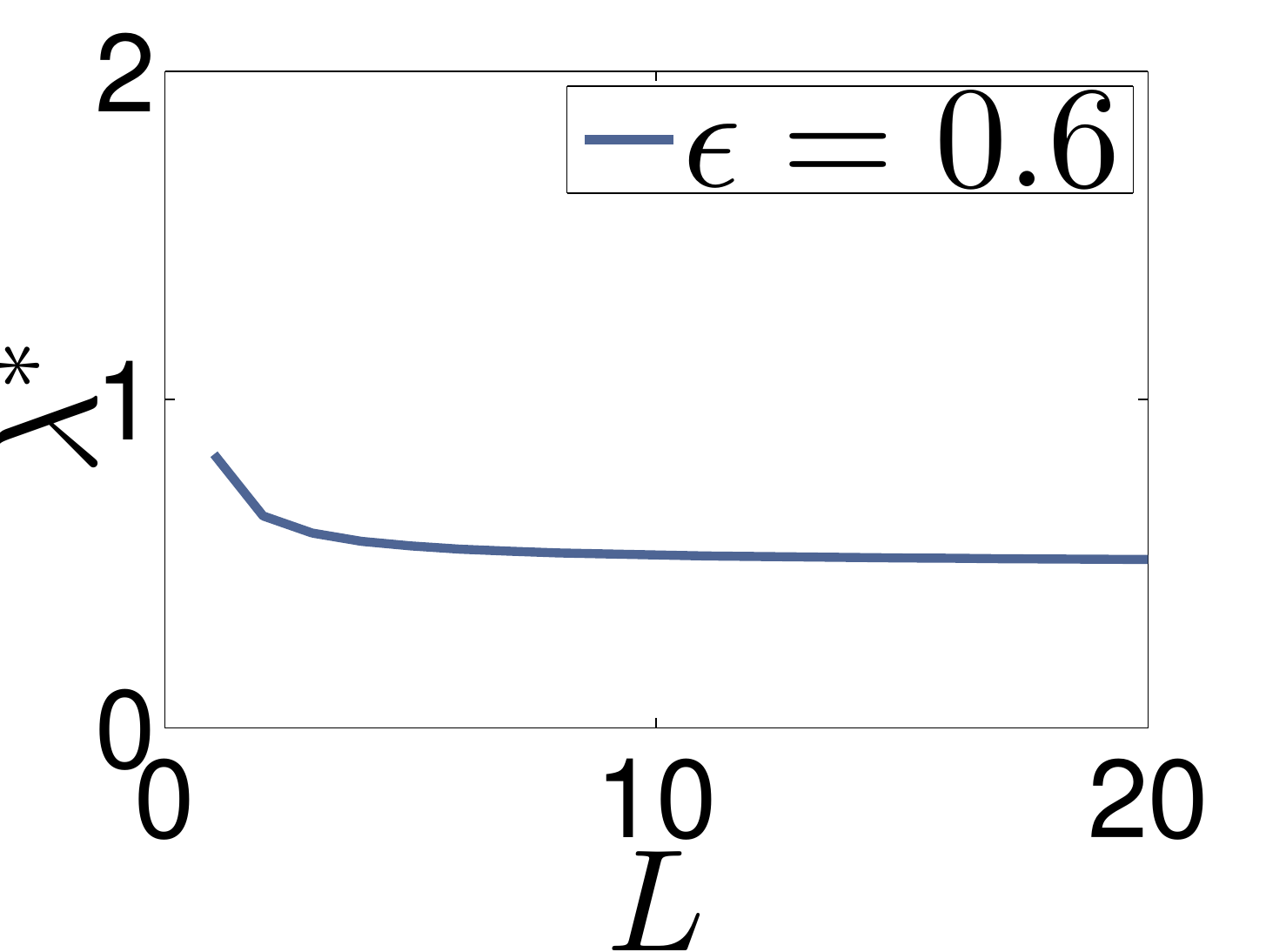}
\end{minipage}
}
\centering
\caption{Effective expectation of merit with $\Lambda=0.5$.}
\label{lab7}
\end{figure}

\section{Our Voting  Consensus Framework with Voting  Trustworthiness Evaluation}\label{sec:trustworthiness}

 In this paper, we employ  the idea of  the private-prior peer prediction \cite{Du18} \cite{miller2005eliciting} to evaluate the trustworthiness of votes that voter $i$ casts to any candidate $j$ so as to determine $t_{ij}^k$  ($i=1,2,\cdots,N; \  j=1,2,\cdots,M; \ k=1,2,...$) in \eqref{score}. Our  voting  consensus framework with trustworthiness evaluation is shown in Fig. \ref{flowchart},  which is detailed in the following.

\begin{figure}[t]
\centerline{
\includegraphics[width=9.5cm]{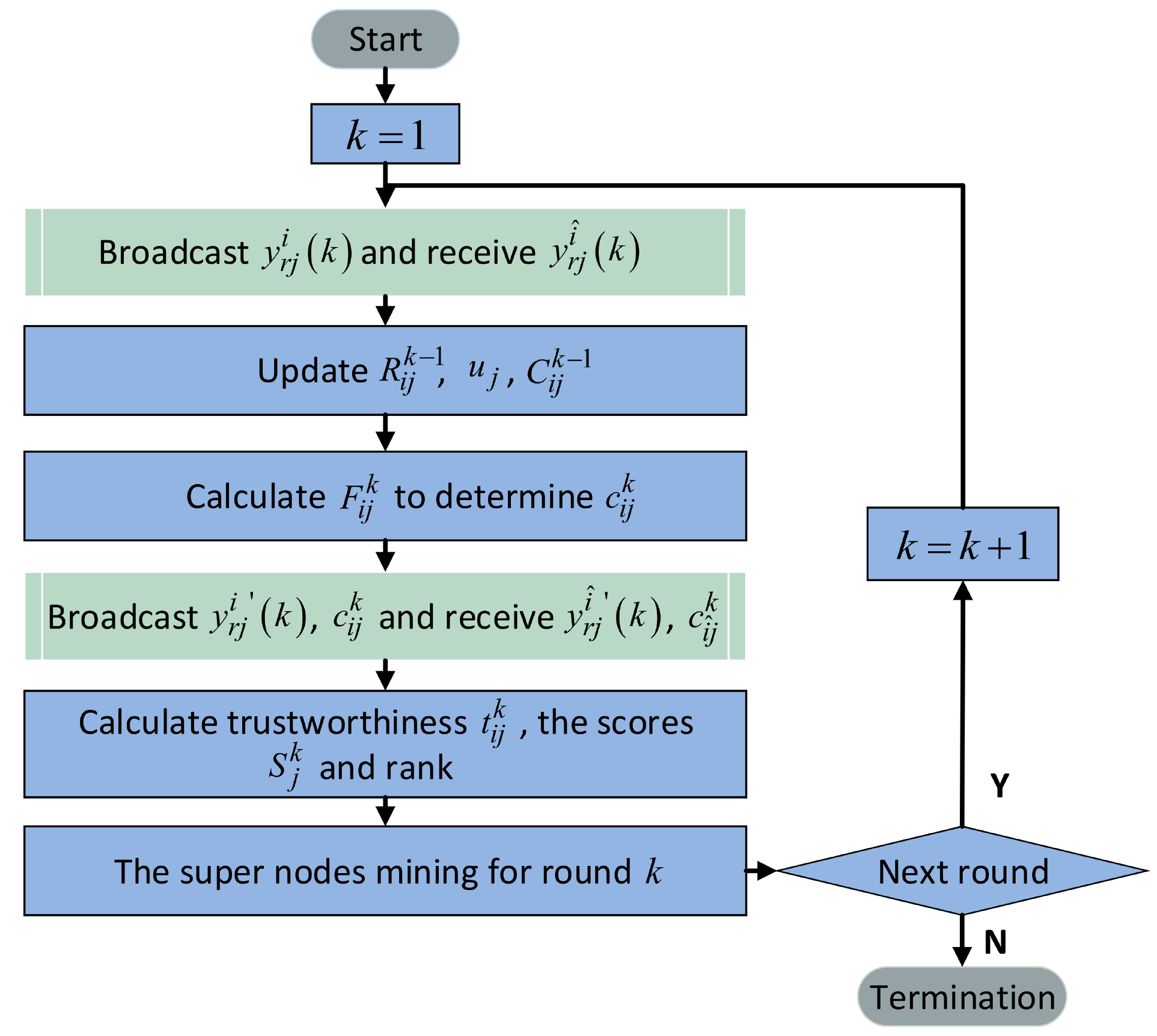}}
\caption{Our  voting  consensus framework with trustworthiness evaluation.}
\label{flowchart}
\end{figure}

In the beginning of round $k$,  any voter $i$ $ (i=1,2,\cdots ,N)$ is required to broadcast her prior belief ${y_{rj}^{i}}(k)\in{\left[0,1\right]}$  about  how likely her random peer voter $r$ $ (r=1,2,\cdots,i-1,i+1,\cdots,N)$ will vote for  candidate $j $ $(j=1,2,\cdots,M)$. At the same time, voter $i$ will also receive the prior belief ${y_{rj}^{\hat{i}}}(k)\in{\left[0,1\right]}$ from all other nodes in  the Blockchain system, where  $\hat{i}\neq i$ is also a voter.  When $k=1$, voters have to exchange the estimation of prior beliefs on selections without any prior knowledge. When  $k>1$, the prior belief $y_{rj}^{i}(k)$ can be calculated as:
\begin{equation}\label{pro}
\begin{split}
y_{rj}^{i}(k)=P(\hat{c_{rj}}=1|\hat{c_{ij}}=1)P(c_{ij}^k=1)\\
+P(\hat{c_{rj}}=1|\hat{c_{ij}}=0)P(c_{ij}^k=0).
\end{split}
\end{equation}
In \eqref{pro},  $P(c_{ij}^k=1)$ and $P(c_{ij}^k=0)$ are respectively the probabilities that voter $i$ considers herself may select or not select candidate $j$ in round $k$,  which can be obtained from her subjective prior belief on selecting this candidate. $\hat{c_{ij}}$ ($\hat{c_{rj}}$) indicates whether voter $i$ ($r$) picks candidate $j$ in the long term.    Hence,  $P(\hat{c_{rj}}=1|\hat{c_{ij}}=1)$ and $P(\hat{c_{rj}}=1|\hat{c_{ij}}=0)$ are  the conditional probabilities  that voter $r$ selects candidate $j$  when voter $i$  makes the same and different decisions respectively, both of which  can be calculated according to the historic results of previous $k-1$ rounds.
 In detail, when $k>1$, we have $P(\hat{c_{rj}}=1|\hat{c_{ij}}=1)=\frac{P(\hat{c_{rj}}=1,\hat{c_{ij}}=1)}{P(\hat{c_{ij}}=1)}
=\sum_{v=1}^{k-1}c_{rj}^v\cdot c_{ij}^v/\sum_{v=1}^{k-1}c_{ij}^v$, and $P(\hat{c_{rj}}=1|\hat{c_{ij}}=0)=\frac{P(\hat{c_{rj}}=1,\hat{c_{ij}}=0)}{P(\hat{c_{ij}}=0)}
=\sum_{v=1}^{k-1}c_{rj}^v\cdot (1-c_{ij}^v)/\sum_{v=1}^{k-1}(1-c_{ij}^v)$.

Then, each voter $i$ updates: a) the  unavailable  times of each candidate $j$ before round $k$,  so as to deduce his unavailable probability $u_j$; b) the selection times $C_{ij}^{k-1}$ of candidate $j$ before $k$, based on which the profit $R_{ij}^{k-1}$  that candidate $j$ devoted to voter $i$  can be obtained; employing the above information, the selection pressure $F_{ij}^{k}$  of each candidate $j$ in round $k$ can be calculated according to \eqref{wij}, so that the voting choice $c_{ij}^k$ ($i=1,2,\cdots,N; \  j=1,2,\cdots,M$) can be obtained through the proposed selection pressure-based voting consensus algorithm.

 After updating and calculating the above key parameters, the cognition of voter $i$ on each candidate iterates, which makes voter $i$ have the  posterior belief ${y_{rj}^{i}}'(k)\in{\left[0,1\right]}$ about  the possibility that the random  peer voter $r$ $ (r=1,2,\cdots,i-1,i+1,\cdots,N)$   will vote for any candidate $j $ $(j=1,2,\cdots,M)$. Then, voter $i$ will notify his choice $c_{ij}^k$ as well as the  posterior belief  ${y_{rj}^{i}}'(k)$  
 to all other nodes, and receive these two parameters from each of them.

  We use $V_j \in \left\{h,l \right\}$ to indicate the mining capability of candidate $j$.  If candidate $j$   is qualified enough to be elected as a super node,  $V_j=h$, and otherwise  $V_j=l$. Let $e_j^k \in \left\{0,1 \right\}$ represent whether candidate $j$ is elected as a super node in round $k$. Since the proposed  voting  consensus framework  can guarantee the voting results are credible, $P(V_j=h)$ can be estimated by $\frac{\sum_{v=1}^{k-1}e_j^v}{k-1}$ before the super nodes are elected in round $k$. Obviously, $P(V_j=l)=\frac{\sum_{v=1}^{k-1}(1-e_j^v)}{k-1}$ in this case. Thus,  the  posterior belief ${y_{rj}^{i}}'(k)$ when $k>  1$ can be calculated as\footnote{When $k=1$, the  posterior belief ${y_{rj}^{i}}'(1)$ is an estimated value  due to no knowledge on each candidate. }
\begin{eqnarray}
\label{eq:log1}
{y_{rj}^{i}}'(k)=
\begin {cases}
  P(\hat{c_{rj}}=1|V_j=h)P(e_j^k=1|c_{ij}^k=1)\\
+P(\hat{c_{rj}}=1|V_j=l)P(e_j^k=0|c_{ij}^k=1),\ \ \ c_{ij}^k=1 \nonumber \\
\\
P(\hat{c_{rj}}=1|V_j=h)P(e_j^k=1|c_{ij}^k=0)\\
+P(\hat{c_{rj}}=1|V_j=l)P(e_j^k=0|c_{ij}^k=0), \ \ \ c_{ij}^k=0
\end {cases}
\end{eqnarray}
where $P(\hat{c_{rj}}=1|V_j=h)=\frac{P(\hat{c_{rj}}=1,V_j=h)}{P(V_j=h)}
=\sum_{v=1}^{k-1}c_{rj}^v \cdot e_{j}^{v}/\sum_{v=1}^{k-1}e_{j}^{v}$ and $P(\hat{c_{rj}}=1|V_j=l)=\frac{P(\hat{c_{rj}}=1,V_j=l)}{P(V_j=l)}
=\sum_{v=1}^{k-1}c_{rj}^v \cdot (1-e_{j}^{v})/[(k-1)-\sum_{v=1}^{k-1}e_{j}^{v}]$.  $P(e_j^k=1|c_{ij}^k=1)$ and  $P(e_j^k=0|c_{ij}^k=1)$ ($P(e_j^k=1|c_{ij}^k=0)$ and $P(e_j^k=0|c_{ij}^k=0)$) respectively represent the  probabilities that candidate $j$ will be elected  or not in round $k$     when voter $i$  picks (does not choose) candidate $j$ in this round. Because in this stage, whether candidate $j$ is finally  elected as a super node is not determined yet,   $P(e_j^k=1|c_{ij}^k=1)$ and  $P(e_j^k=0|c_{ij}^k=1)$  ($P(e_j^k=1|c_{ij}^k=0)$ and $P(e_j^k=0|c_{ij}^k=0)$) need to be predicted by voter $i$, thus reflecting her real thoughts on whether candidate $j$ is qualified enough to be elected.

 Based on the prior and posterior beliefs, i.e.,  $y_{rj}^{i}(k)$ and  ${y_{rj}^{i}}'(k)$, as well as  the voting choice $c_{rj}^k$ received, the trustworthiness $t_{ij}^k$ of voter $i$'s choice $c_{ij}^k$ in round $k$ can be  calculated according to the following scoring rule \cite{a2}:
\begin{small}
\begin{equation}
\label{validity}
t_{ij}^k=\left [ \alpha W\left ( y_{rj}^{i}(k),c_{rj}^k \right )+\left ( 1-\alpha \right )W\left ( {y_{rj}^{i}}'(k),c_{rj}^k \right )+\beta \right ].
\end{equation}
\end{small}In \eqref{validity},
$\alpha\in [0,1]$ is a scaling parameter, and $W$ can be the logarithmic form, namely,
 \begin{equation}
\left\{\begin{matrix}
W\left ( y,c=1 \right )&=&\ln(y),\\
W\left ( y,c=0 \right )&=&\ln(1-y),
\end{matrix}\right.
\end{equation}
or the quadratic form, i.e.,
 \begin{equation}
\left\{\begin{matrix}
W\left ( y,c=1 \right )&=& 2y-{y^{2}},\\
W\left ( y,c=0 \right )&=& 1-{y^{2}}.
\end{matrix}\right.
\end{equation} And
\begin{small}
 \begin{equation}\label{b}
\beta=-\frac{1}{N}\sum_{i=1}^{N}\left [\alpha W\left ( y_{rj}^{i}(k),c_{rj}^k \right )+\left ( 1-\alpha \right )W\left ( {y_{rj}^{i}}'(k),c_{rj}^k \right )\right ].
\end{equation}
\end{small}
According to \eqref{b},  $\beta$ is used to measure how far the first two items in \eqref{validity} deviate from the average value.  Thus,   $t_{ij}^k>0$  represents trustworthy voting. The bigger the $t_{ij}^k$, the higher the credibility of the voting.  On the contrary,   $t_{ij}^k<0$ reveals an unreliable voting. The smaller the $t_{ij}^k$, the worse the voting credibility.

After calculating $t_{ij}^k$ and combining with $c_{ij}^k$ obtained in the last section, as well as  the stake of voter $i$, i.e., $s_i$, 
we can obtain the score $S_j^k$ of each candidate in round $k$  according to \eqref{score}.  The top $K$
nodes with the highest scores $S_j^k$ can be elected as super nodes in round $k$ finally.

\begin{theorem} \label{theorem:1}
The method for evaluating trustworthiness is incentive compatible.
\end{theorem}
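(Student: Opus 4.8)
The plan is to prove incentive compatibility by showing that, for every voter $i$, reporting her true prior belief $y_{rj}^{i}(k)$ and true posterior belief ${y_{rj}^{i}}'(k)$ uniquely maximizes her expected trustworthiness $t_{ij}^{k}$ in \eqref{validity}. Since a higher $t_{ij}^{k}$ is the only payoff this subprotocol confers, establishing this maximization means no rational voter can profit by misreporting (in particular, by colluding with a candidate). The crux is that $W$, in both the logarithmic and the quadratic form, is a strictly proper scoring rule.

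First I would fix voter $i$'s belief about her random peer and note that the peer's realized choice $c_{rj}^{k}\in\{0,1\}$ is Bernoulli; let $p=P(c_{rj}^{k}=1)$ under that belief. For a generic report $\hat{y}$, the expected value of one $W$-term is $\textbf{E}[W(\hat{y},c_{rj}^{k})]=p\,W(\hat{y},1)+(1-p)\,W(\hat{y},0)$. For the logarithmic form this equals $p\ln\hat{y}+(1-p)\ln(1-\hat{y})$; the first-order condition $p/\hat{y}-(1-p)/(1-\hat{y})=0$ has the unique root $\hat{y}=p$, and the negative second derivative $-p/\hat{y}^{2}-(1-p)/(1-\hat{y})^{2}<0$ certifies a strict maximum. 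For the quadratic form the expectation simplifies to $2p\hat{y}+1-p-\hat{y}^{2}$, again maximized uniquely at $\hat{y}=p$. Thus truthful reporting is the unique optimizer of each individual $W$-term.

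Next I would use the additive separability of \eqref{validity} in the two reported beliefs. The prior is committed at the start of round $k$, under voter $i$'s prior information set, so she maximizes the $\alpha W(y_{rj}^{i}(k),c_{rj}^{k})$ term by reporting her true prior; the posterior is reported only after she updates her cognition, so under the posterior information set she maximizes the $(1-\alpha)W({y_{rj}^{i}}'(k),c_{rj}^{k})$ term by reporting her true posterior. Because $\alpha\in[0,1]$, both weights are nonnegative and the two reporting decisions decouple, so truthful reporting of both beliefs simultaneously maximizes the weighted sum.

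Finally I would handle the normalization $\beta$ from \eqref{b}. The only way $\beta$ depends on voter $i$'s own report is through its $i$-th summand, which carries coefficient $-1/N$; adding this to the leading terms of \eqref{validity} leaves voter $i$'s true-belief terms with net coefficient $1-1/N>0$ (for $N>1$), while every other summand of $\beta$ is constant with respect to her report. Hence the argmax over voter $i$'s reports is unchanged and truthful reporting remains the strict maximizer, which is precisely incentive compatibility. I expect this last step to be the main obstacle: one must check carefully that the self-referential normalization neither reverses the sign of the effective coefficient nor couples voter $i$'s prior and posterior reports in a way that would break properness. By contrast, verifying strict properness of $W$ itself is routine calculus.
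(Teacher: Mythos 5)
Your proposal is correct and follows essentially the same route as the paper's own proof: both reduce incentive compatibility to the strict properness of $W$ (via the first- and second-order conditions showing the expected score is uniquely maximized at the true Bernoulli parameter), and both observe that the normalization $\beta$ only rescales voter $i$'s own terms by the net coefficient $1-\frac{1}{N}>0$ while contributing constants otherwise. Your explicit treatment of the decoupling of the prior and posterior reports and the worked-out quadratic case are slightly more careful than the paper's, but the argument is the same.
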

\begin{proof}
As the expectation of $t_{ij}^k$, $\mathbb{E}(t_{ij}^k)$ can be calculated as
\begin{small}
\begin{align*}
\mathbb{E}[t_{ij}^k]=&E[\alpha W(y_{rj}^i(k),c_{rj}^k)]+E[(1-\alpha) W({y_{rj}^i}'(k),c_{rj}^k)]+E[\beta]\\
=&\alpha(1-\frac{1}{N})E[W(y_{rj}^i(k),c_{rj}^k)]\\
&+(1-\alpha)(1-\frac{1}{N})E[W({y_{rj}^i}'(k),c_{rj}^k)]\\
&-\frac{1}{N}\sum_{n=1,n \neq i}^{N} [\alpha W(y_{rj}^n(k),c_{rj}^k)+(1-\alpha)W({y_{rj}^n}'(k),c_{rj}^k)].
\end{align*}
\end{small}
Employing the logarithmic form of  $W$, we have
\begin{small}
\begin{align*}
\mathbb{E}[t_{ij}^k]&=\alpha(1-\frac{1}{N})[p_1\ln y_{rj}^i(k)+(1-p_1)\ln (1-y_{rj}^i(k))]\\
&+(1-\alpha)(1-\frac{1}{N})[p_2\ln {y_{rj}^i}'(k)+(1-p_2)\ln (1-{y_{rj}^i}'(k))]\\
&-\frac{1}{N}\sum_{n=1,n \neq i}^{N} [\alpha W(y_{rj}^n(k),c_{rj}^k)+(1-\alpha)W({y_{rj}^n}'(k),c_{rj}^k)].
\end{align*}
\end{small}

In light of
\begin{small}
\begin{align*}
\frac{\partial \mathbb{E}[t_{ij}^k]}{\partial y_{rj}^{i}(k)}&=\alpha(1-\frac{1}{N})\frac{p_1-y_{rj}^i(k)}{y_{rj}^i(k)(1-y_{rj}^i(k))}=0,\\
\frac{\partial \mathbb{E}[t_{ij}^k]}{\partial{y_{rj}^{i}}'(k)}&=\alpha(1-\frac{1}{N})\frac{p_2-{y_{rj}^{i}}'(k)}{{y_{rj}^{i}}'(k)(1-{y_{rj}^{i}}'(k))}=0,
\end{align*}
\end{small}we can deduce $y_{rj}^{i}(k)=p_1$ and ${y_{rj}^{i}}'(k)=p_2$  can satisfy the above equations, where $p_1=P(\hat{c_{rj}}=1)$,  $p_2=P(\hat{c_{rj}}=1|c_{ij}^k=0)$ when $c_{ij}^k=0$ and $=P(\hat{c_{rj}}=1|c_{ij}^k=1)$ when  $c_{ij}^k=1$.
Since
\begin{small}
\begin{align*}
\left.\frac{\partial \mathbb{E}^{2}[t_{ij}^k]}{\partial {y_{rj}^{i}}^{2}(k)}\right |_{y_{rj}^{i}(k)=p_1}
&=\alpha(1-\frac{1}{N})\frac{y_{rj}^{i}(k)(y_{rj}^{i}(k)-1)}{{y_{rj}^{i}}^2(k)(1-{y_{rj}^{i}}^2(k))}<0, \\
\left.\frac{\partial \mathbb{E}^{2}[t_{ij}^k]}{\partial {{y_{rj}^{i}}'}^{2}(k)}\right |_{{y_{rj}^{i}}'(k)=p_2}
&=\alpha(1-\frac{1}{N})\frac{{y_{rj}^{i}}'(k)({y_{rj}^{i}}'(k)-1)}{{{y_{rj}^{i}}'}^2(k)(1-{{y_{rj}^{i}}'}^2(k))}<0,
\end{align*}
\end{small}$y_{rj}^{i}(k)=p_1$ and ${y_{rj}^{i}}'(k)=p_2$ can maximize  $\mathbb{E}[t_{ij}^k]$.
Because $p_1$ and $p_2$ represent respectively the prior and posterior probabilities that an arbitrary voter $i$ supports candidate $j$, reflecting his objective mining capacity. Hence, $y_{rj}^{i}(k)=p_1$ and ${y_{rj}^{i}}'(k)=p_2$ indicate the subjective beliefs tally with the objective reality, implying the voter reports the truth in this case. In another word, only a voter behaves honestly, can she maximizes her trustworthiness of voting, meaning the method for evaluating trustworthiness is incentive compatible. When we use  the quadratic form to calculate $W$, the same conclusion can be drawn.
\end{proof}

\section{Experimental Evaluation}
\label{sec:experiment}
In this section, we analyze the impacts of some key parameters on the reward of voters and testify the effectiveness of our proposed voting consensus mechanism. We conduct a large number of simulations under different parameter settings, but only report partial results derived in a few parameter settings as follows, because other parameter settings present similar trends and thus we omit them to avoid redundancy.

Basically, we consider there are 50 candidate miners in total and set the reward of generating a new block as 12.5 which is in line with the current setting of Bitcoin and will be shared by all voters of the winning candidate $j$, i.e., $\sum_i R^k_{ij}=12.5$.
Also, we assume that voters own different numbers of stakes  indicating their different weights during the voting process, where any voter $i$ can choose $s_i \in \{ 1,2,3,4\}$. Besides, we define that all voters can give each candidate up to one vote, and each voter needs to choose the top $K$ super nodes that she considers reliable to give one vote.

\subsection{Availability Function}
\label{s_6_2}
First, in order to study the effect of unavailable probability on the voter's merit, we calculate the cumulative reward of any voter until round $t$ with different availability functions in Fig. \ref{lab2}, as well as different numbers of stake for the voter.
As shown in Fig. \ref{lab2}(a), we present the cumulative reward of the voter when her stake varies and the availability function changes, as well as the case of not considering unavailability factor, till the round $t=100$.
While Fig. \ref{lab2}(b) illustrates the three specific availability functions, i.e., $d_1$, $d_2$, and $d_3$,
varying with unavailable probability $u_j \in \left [ 0,1\right ]$. The three types of functions represent power function, exponential function and linear function respectively. And the scaling parameter in \eqref{aijt} is $\rho=5$, which satisfies the requirement to guarantee the stability of virtual queue \footnote{Other values that meet the requirements can be implemented in a similar way.}.

In the light of Fig. \ref{lab2}(a), we can observe that with the increase of the voter's stake, the cumulative reward increases correspondingly. Further more, through comparing the cumulative reward in the case of considering the unavailability and without it, one can find that including the unavailability is significant to affect the reward of the voter, while when it comes to the cumulative reward with three different availability functions, varying the decline speed of availability function has no obvious effect on the voter's profit. Also, with the increasing number of stake, the influence of whether involving unavailability factor or not becomes more significant, which can be seen from the increasingly large gap between the results of considering no unavailability and including it.

\begin{figure}[t]
\centering
\subfigure[Reward comparison.]{
\begin{minipage}[t]{0.4\linewidth}
\centering
\includegraphics[width=1.1\textwidth]{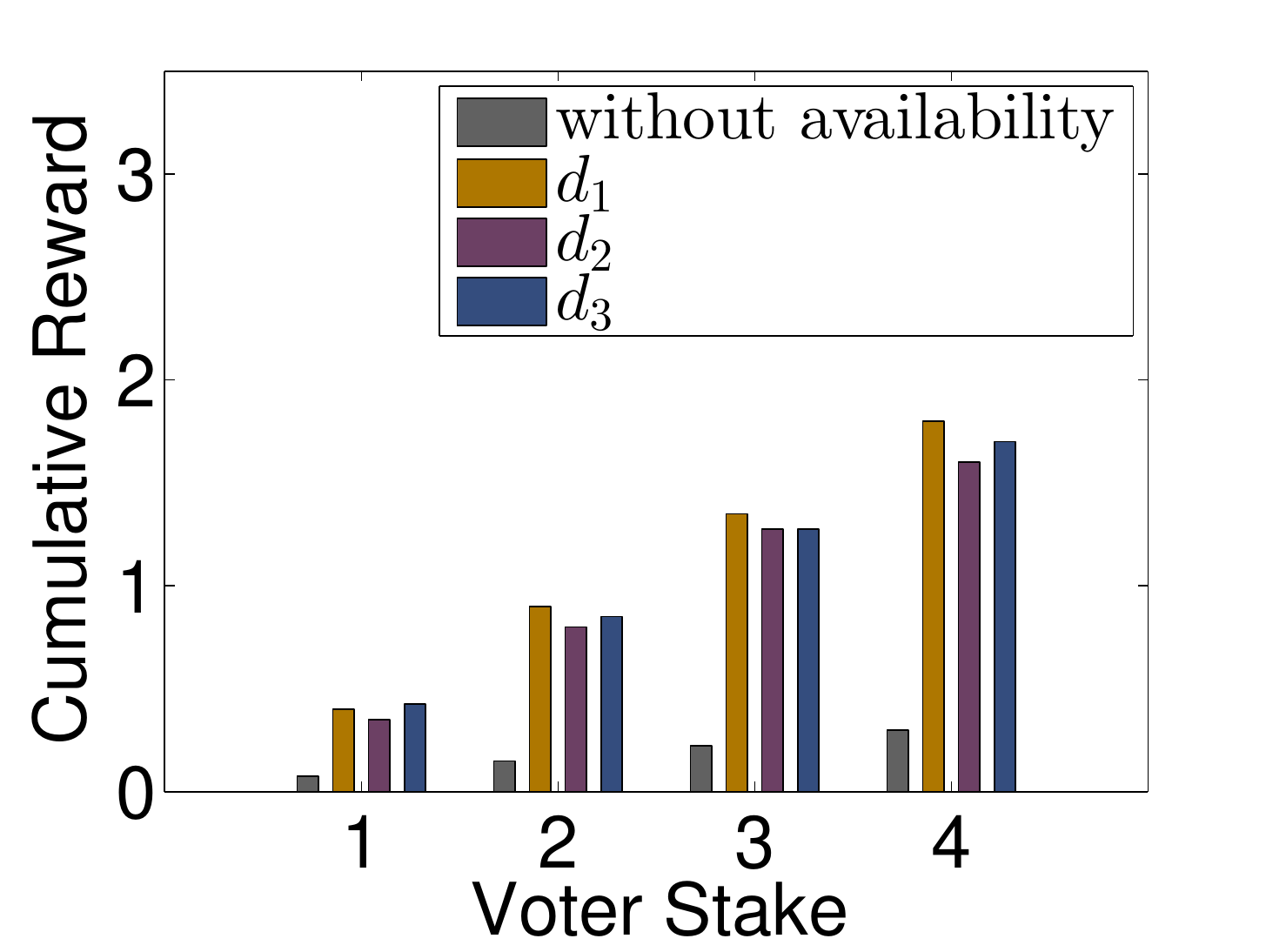}
\end{minipage}
}
\subfigure[Availability functions.]{
\begin{minipage}[t]{0.4\linewidth}
\centering
\includegraphics[width=1.1\textwidth]{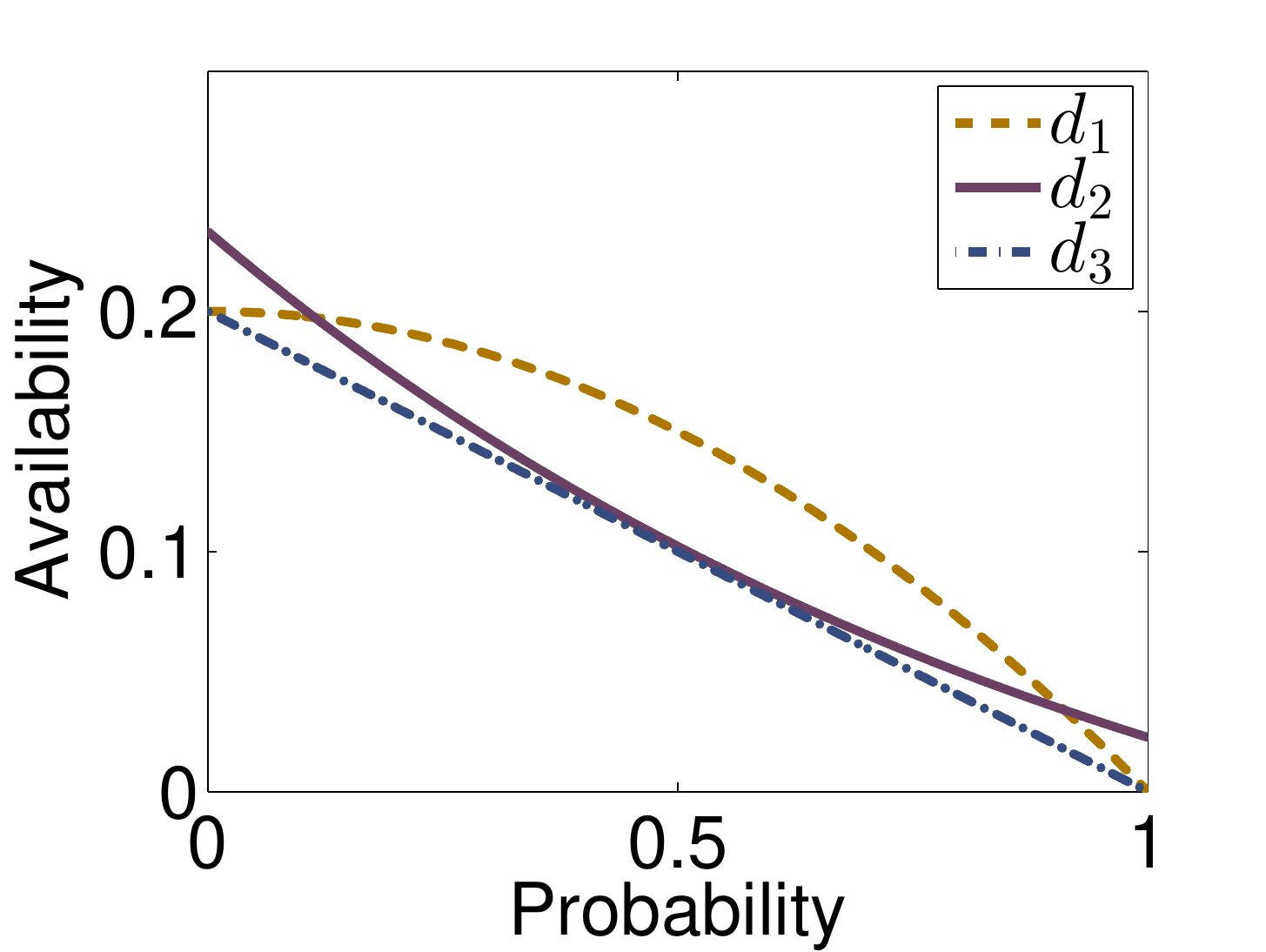}
\end{minipage}
}
\centering
\captionsetup{font={small}}
\caption{Cumulative reward of the voter with different availability functions and numbers of stake.}
\label{lab2}
\end{figure}

\subsection{Evolution of the Reward}
\label{s_6_3}
Next, we investigate the evolution of the voter's reward over time under different numbers of stake and different availability functions, where the experimental results are presented in Figs. \ref{lab3} and \ref{lab4}, respectively.

According to Fig. \ref{lab3}, one can easily observe that the more stake the voter voting to the winning candidates, the higher her cumulative reward is, which is consistent with the general expectation. In detail, the stake of voter $i$, i.e., $s_i$, is used as the voting weight, which affects her reward obtained from the winning candidate. Considering the case where the voter votes the same candidates as super nodes, she can receive reward from the successfully elected candidates she supported, which is positively related to her voting weights as the super nodes divide reward according to the proportion of their supporters' votes, i.e., $\frac{s_i\times t^k_{ij}\times c^k_{ij}}{\sum_{k=1}^{N}s_i\times t^k_{ij}\times c^k_{kj}}=\frac{s_i\times t^k_{ij}\times c^k_{ij}}{S^k_j}$. Thus, the greater the weight, the more the reward. Besides, because the voter will never suffer from any economic loss during the voting process, her total reward is increasing as the number of rounds increases.

\begin{figure}[t]
\centering
\includegraphics[width=0.4\linewidth]{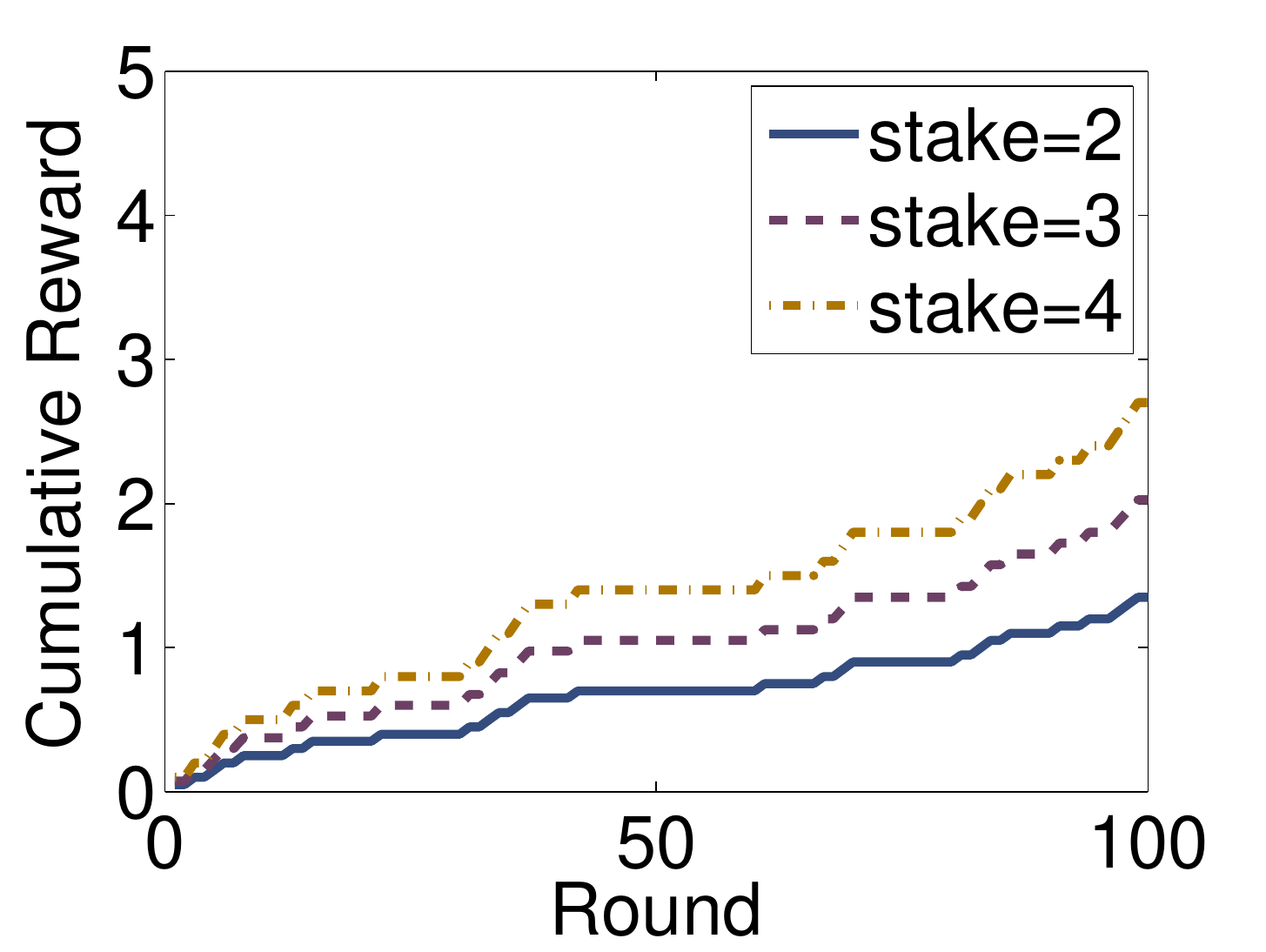}
\captionsetup{font={small}}
\caption{Evolution of the voter's cumulative reward with different numbers of stake.}
\label{lab3}
\end{figure}

\begin{figure}
\centering
\subfigure{
\begin{minipage}[t]{0.25\linewidth}
\centering
\includegraphics[width=1.1\textwidth]{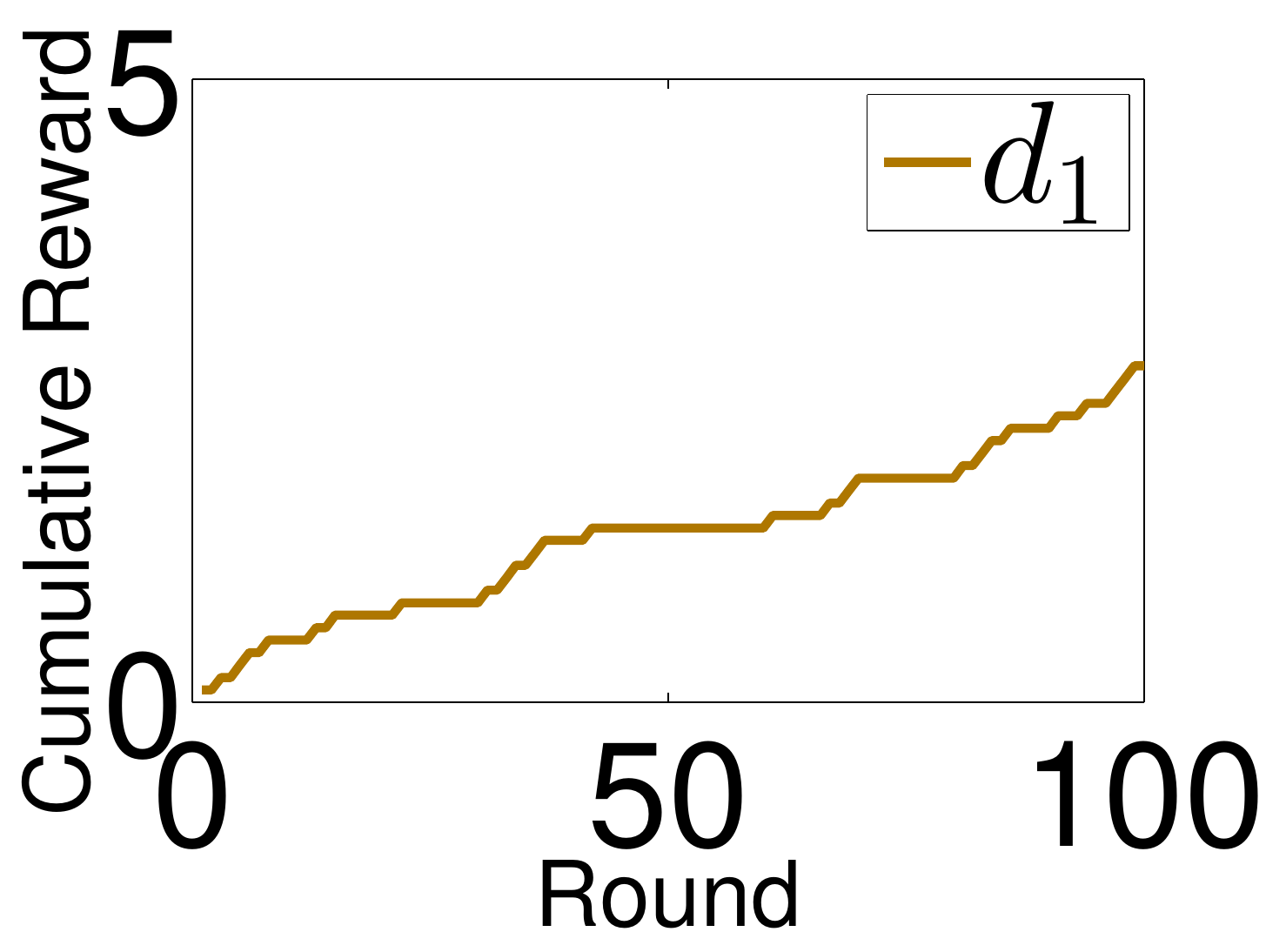}
\end{minipage}
}
\subfigure{
\begin{minipage}[t]{0.25\linewidth}
\centering
\includegraphics[width=1.1\textwidth]{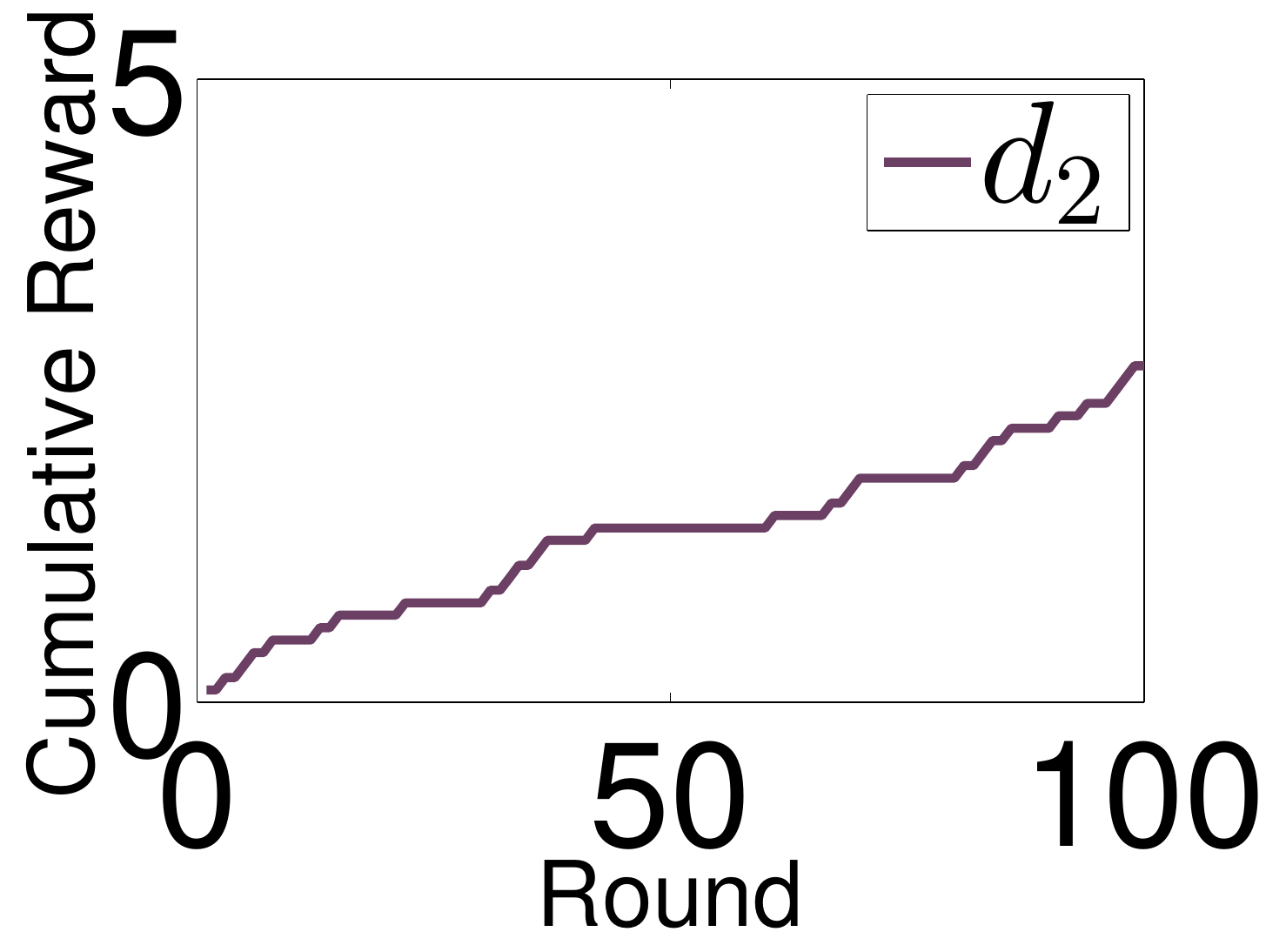}
\end{minipage}
}
\subfigure{
\begin{minipage}[t]{0.25\linewidth}
\centering
\includegraphics[width=1.1\textwidth]{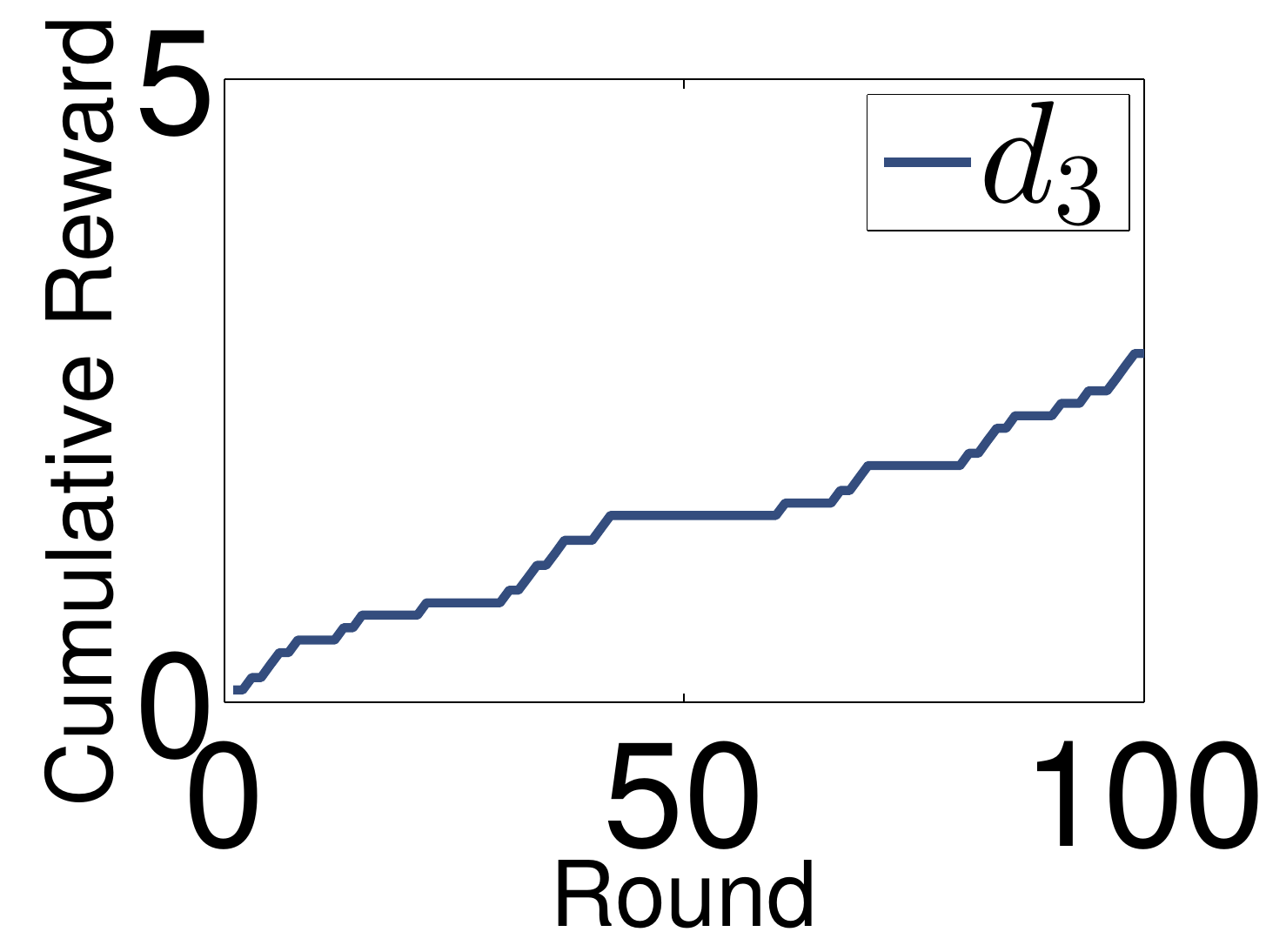}
\end{minipage}
}
\centering
\captionsetup{font={small}}
\caption{Evolution of the voter's cumulative reward with different availability functions.}
\label{lab4}
\end{figure}

In Fig. \ref{lab4}, we plot the evolution of the voter's reward with different unavailability functions $d_1$, $d_2$, and $d_3$. It is clear that all these three functions make the voter's reward present the same evolution trend. That is, the decline rate of availability function does not have a significant impact on the changing rate of the voter's reward. This is in accordance with the requirement that the difference of availability functions should not affect whether the candidate can be selected as super nodes or not, which further determines that the voter's reward will not be influenced.

\subsection{Effectiveness of Our Proposed Voting Consensus Algorithm}
\label{s_6_1}
In this subsection, we investigate the effectiveness of our proposed voting consensus mechanism through comparing the voting results without trustworthiness evaluation and the voting results with trustworthiness evaluation 
in one round of voting. The number of super nodes that need to be elected is set to $K=5$ or $21$, where the corresponding voting results are reported in Figs. \ref{lab1_1} and \ref{lab1_2}, respectively. Note that in the case of $K=5$, as only a few candidates with higher capability to be elected as super nodes are among the first 10 candidates, we report the voting results of them in Fig. \ref{lab1_1} for better presentation even there are 50 candidates in total.

Before conducting the simulation experiment, we rank the capability  of candidate $j$, i.e., $V_j$, in the current round according to their unavailability probability, successful mining rate and community rewards so as to obtain the objective ranking results in the absence of collusion, bribery, inertia and other negative factors, which are reported in the second column of TABLE \ref{tab1} and that of TABLE \ref{tab2}. And the voting results without trustworthiness evaluation in Figs. \ref{lab1_1}(a) and \ref{lab1_2}(a) are numerically presented in the third column of TABLE \ref{tab1} and that of TABLE \ref{tab2}, respectively. Similarly, the voting results with trustworthiness evaluation shown in Figs. \ref{lab1_1}(b) and \ref{lab1_2}(b) are also reported in the fourth column of TABLE \ref{tab1} and that of TABLE \ref{tab2}, respectively.

According to Fig. \ref{lab1_1}(a), in the case of selecting 5 super nodes, the malicious candidate No. 4 bribes some voters to make his total number of votes greater than that of No. 7 in the voting process without trustworthiness evaluation. While after trustworthiness assessment of each voter based on our proposed algorithm, it can be seen that the malicious competition behavior of No. 4 has been successfully suppressed as shown in Fig. \ref{lab1_1}(b), and the ultimately-elected super nodes are exactly what we expect as reported in the fourth column of TABLE \ref{tab1}.
When $K=21$, we assume that candidates No. 1 to No. 4 bribe voters in this round of voting.
In fact, it is unnecessary for candidate No. 3 to bribe any voter as he has the highest real capability to be elected. Through comparing Figs. \ref{lab1_2}(a) and \ref{lab1_2}(b), one can find that the  voting results with  trustworthiness evaluation can definitely meet the basic requirement of selecting super nodes with higher capability; and candidates No. 1, No. 2 and No. 4 are successfully suppressed. The final 21 super nodes are presented in the fourth column in TABLE \ref{tab2}.
All the above experimental results demonstrate that our proposed mechanism can successfully prohibit malicious competition behaviors, such as the wrong election and bribery election, in the voting process. And the underlying reason is that with the help of trustworthiness evaluation for each voter, our proposed voting consensus mechanism can frustrate candidates to bribe and eliminate wrong behaviors of voters.

\begin{figure}[t]
\centering
\subfigure[No trustworthiness evaluation.]{
\begin{minipage}[t]{0.4\linewidth}
\centering
\includegraphics[width=1.1\textwidth]{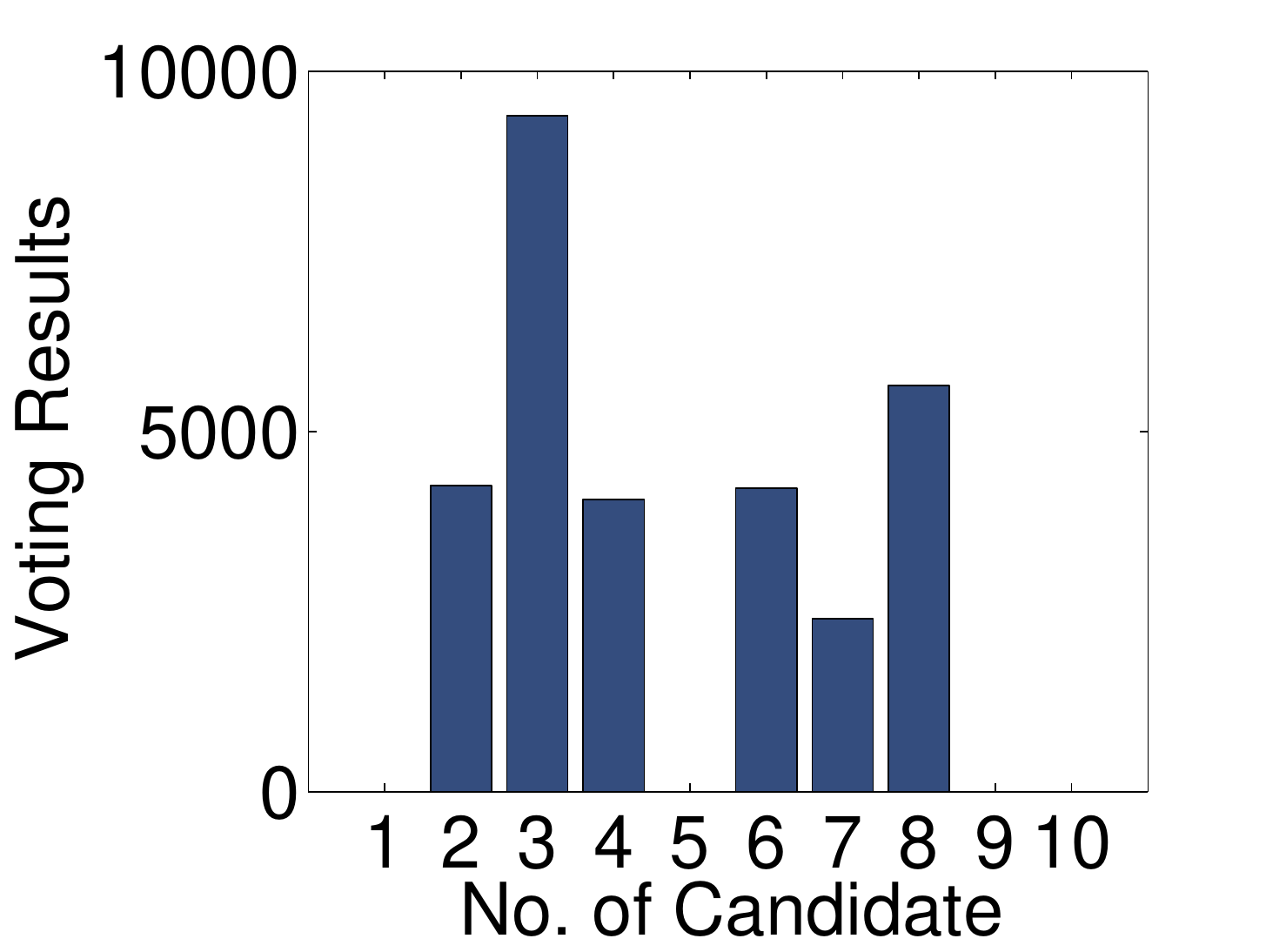}
\end{minipage}
}
\subfigure[With trustworthiness evaluation]{
\begin{minipage}[t]{0.4\linewidth}
\centering
\includegraphics[width=1.1\textwidth]{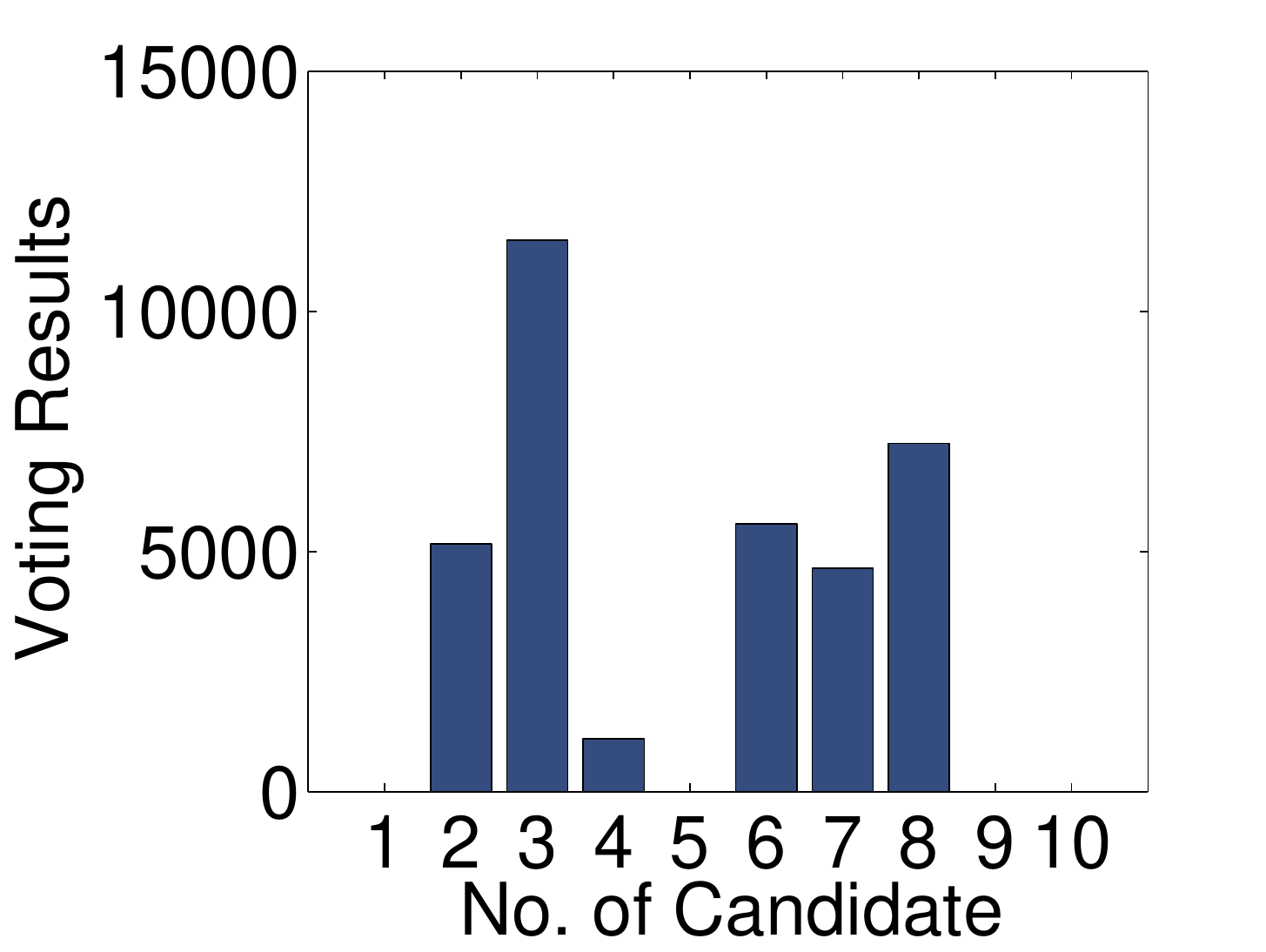}
\end{minipage}
}
\centering
\captionsetup{font={small}}
\caption{Effectiveness of the voting consensus algorithm when $K=5$.}
\label{lab1_1}
\end{figure}

\begin{table}[t]
\setlength{\abovecaptionskip}{0.cm}
\setlength{\belowcaptionskip}{0cm}
\caption{Ranking results when $K=5$.}
\begin{center}
\begin{tabular}{|p{0.8cm}<{\centering}|p{1.1cm}<{\centering}|p{2cm}<{\centering}|p{2cm}<{\centering}|}
\cline{1-4}
\textbf{No. of Super Nodes} & \textbf{Capability Ranking}& \textbf{Voting without Trustworthiness Evaluation}& \textbf{ Voting with Trustworthiness Evaluation} \\
\hline
1 & 3 & 3 & 3 \\
2 & 8 & 8 & 8 \\
3 & 2 & 2 & 6 \\
4 & 6 & 6 & 2 \\
5 & 7 & 4 & 7 \\
\hline
\end{tabular}
\label{tab1}
\end{center}
\end{table}

\begin{figure}[t]
\centering
\subfigure[No trustworthiness evaluation.]{
\begin{minipage}[t]{0.4\linewidth}
\centering
\includegraphics[width=1.1\textwidth]{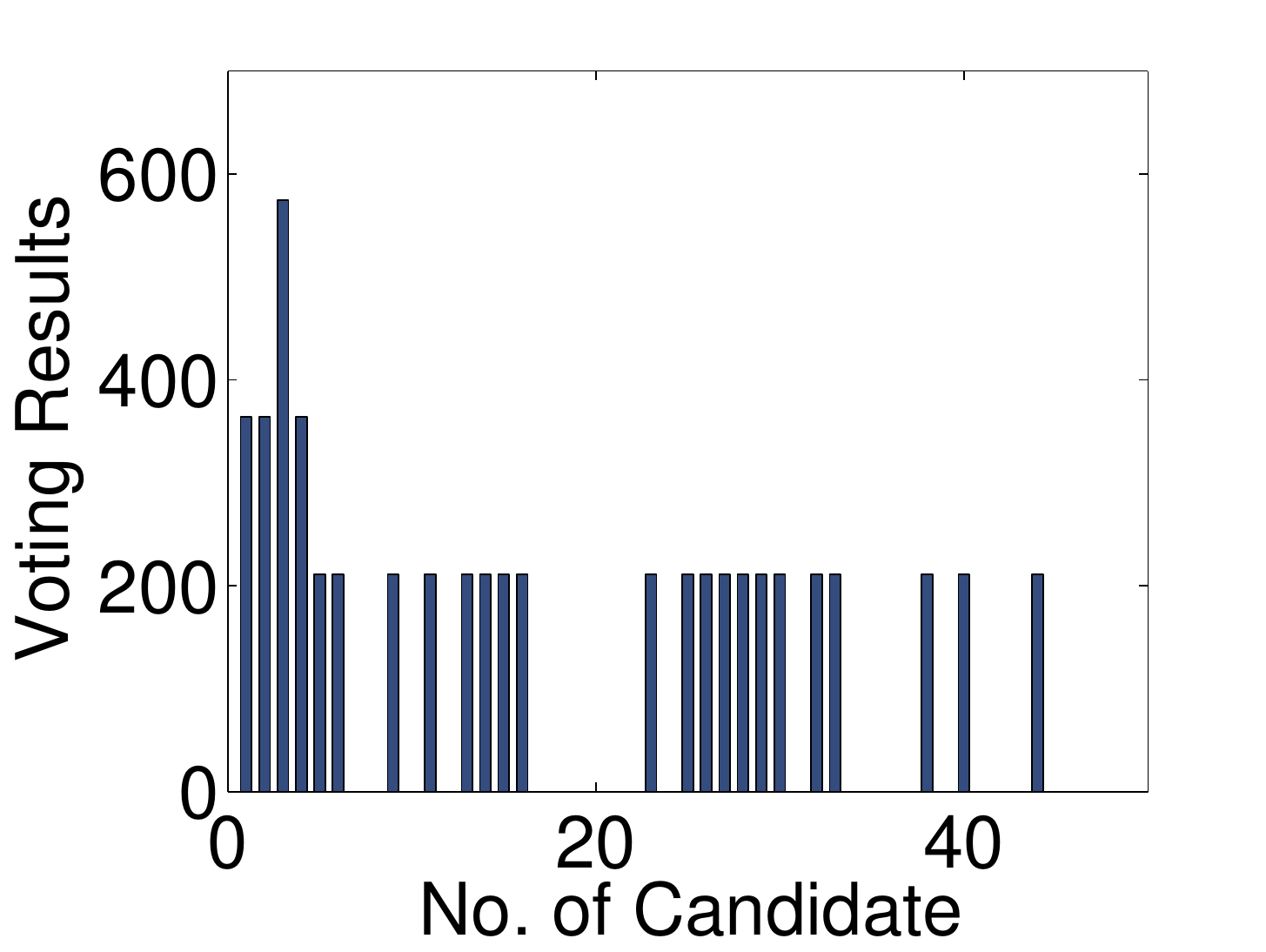}
\end{minipage}
}
\subfigure[With trustworthiness evaluation.]{
\begin{minipage}[t]{0.4\linewidth}
\centering
\includegraphics[width=1.1\textwidth]{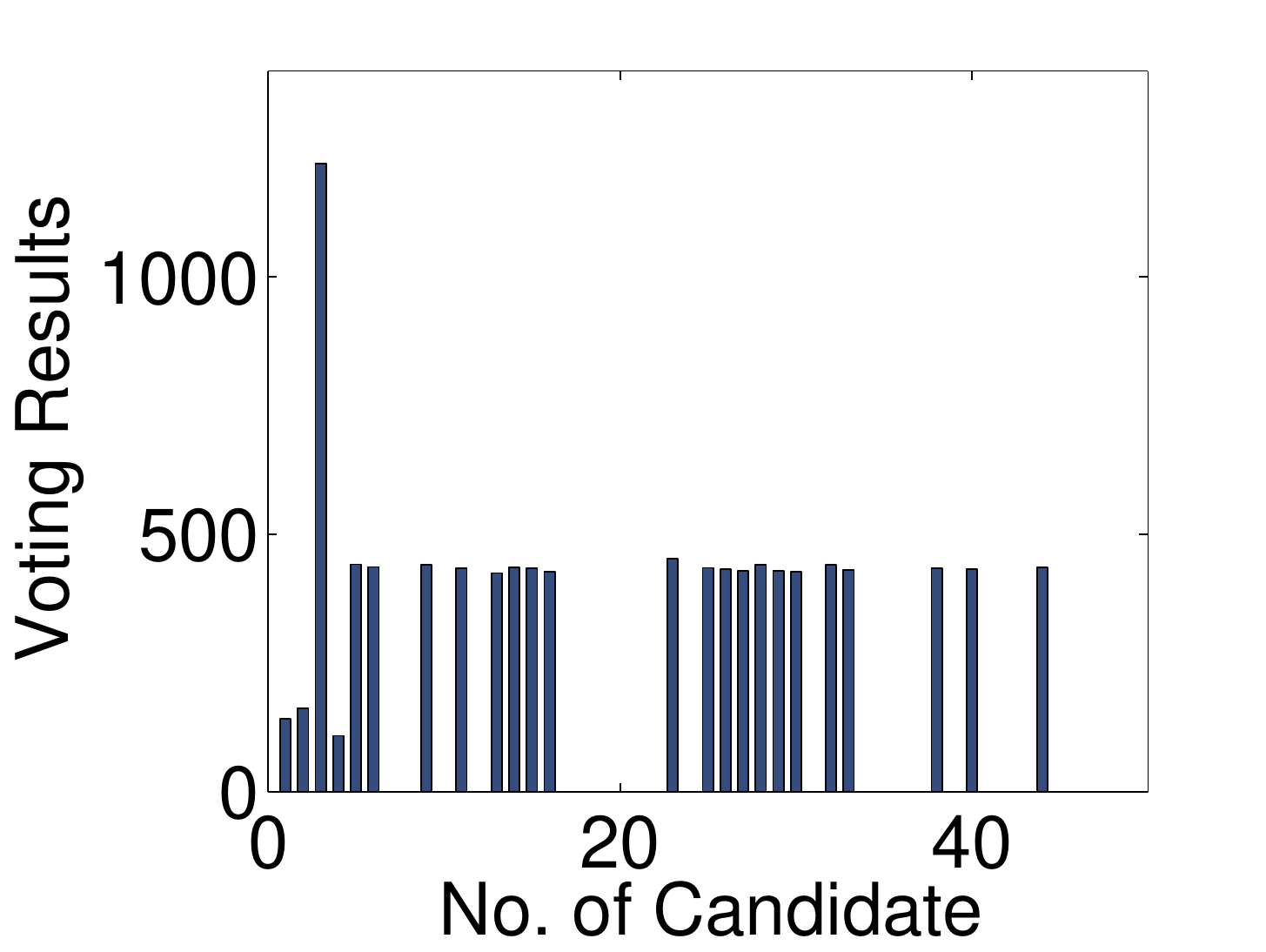}
\end{minipage}
}
\centering
\captionsetup{font={small}}
\caption{Effectiveness of the voting consensus algorithm when $K=21$.}
\label{lab1_2}
\end{figure}

\begin{table}[t]
\setlength{\abovecaptionskip}{0cm}
\setlength{\belowcaptionskip}{0cm}
\caption{Ranking results when $K=21$.}
\begin{center}
\begin{tabular}{|p{0.8cm}<{\centering}|p{1.1cm}<{\centering}|p{2cm}<{\centering}|p{2cm}<{\centering}|}
\cline{1-4}
\textbf{No. of Super Nodes} & \textbf{Capability Ranking}& \textbf{Voting without Trustworthiness Evaluation}& \textbf{ Voting with Trustworthiness Evaluation} \\
\hline
1 & 3 & 3 & 3 \\
2 & 23 & 1 & 23 \\
3 & 5 & 2 & 5 \\
4 & 28 & 4 & 28 \\
5 & 9 & 5 & 9 \\
6 & 32 & 6 & 32 \\
7 & 6 & 9 & 6 \\
8 & 44 & 11 & 44 \\
9 & 14 & 13 & 14 \\
10 & 25 & 14 & 25 \\
11 & 38 & 15 & 38 \\
12 & 15 & 16 & 15 \\
13 & 11 & 23 & 11 \\
14 & 40 & 25 & 40 \\
15 & 26 & 26 & 26 \\
16 & 33 & 27 & 33 \\
17 & 29 & 28 & 29 \\
18 & 27 & 29 & 27 \\
19 & 30 & 30 & 30 \\
20 & 16 & 32 & 16 \\
21 & 13 & 33 & 13 \\
\hline
\end{tabular}
\label{tab2}
\end{center}
\end{table}

\section{Related Work}\label{sec:related}
Generally speaking, the existing consensus algorithms in Blockchain can be divided into two categories: \textit{proof-based} and \textit{voting-based}.

In the proof-based  consensus algorithms,  only when the nodes execute sufficient proof, can they gain the right to mine a new block for rewards. PoW \cite{Nakamoto08} and PoS \cite{King12} are two  classic proof-based consensus algorithms as mentioned in Section \ref{into}. Taking advantage of the traits of  PoW and PoS,   researchers proposed  several variants,
such as the proof of activity  (PoA) \cite{Bentov14}, the proof of capacity (PoC) \cite{Burstcoin14}, and the proof of elapsed time (PoET) \cite{poet17,Chen17}.

In PoA \cite{Bentov14}, a mined block needs to be signed by $N$ miners to become valid. Thus, the difficulty of mining a block depends on the fraction of online stakeholders instead of the number of coins in PoS. Besides, stakes can also be represented by capacity \cite{Burstcoin14}, where miners allocate large capacity in the local hard drive for mining, indicating their interest in mining the block.
In PoET \cite{poet17,Chen17}, the leader is elected through a lottery-based election model, which is required to be run in a trusted execution environment (TEE). In detail, each validator takes advantage of  TEE to generate a waiting time randomly, where the one with the shortest waiting time becomes the leader, ensuring that the leader is fairly selected.

The proof-based consensus algorithms summarized above suffer from a lot of inherent flaws.
PoW wastes too much electrical energy and is not fair for poor miners who cannot afford advanced hardware. And with the appearance of pool mining, more security issues arise or worsen, such as selfish mining \cite{eyal2018majority}, double spending \cite{karame2012double}, block withholding attack \cite{hu2019game}, etc. Because of the latency of new block creation and confirmation,  PoW is unsuitable for real-time payment.
Even though PoS is more efficient and fairer compared to PoW, it still has to calculate a meaningless nonce value and can lead to the situation that the rich get richer.
Meanwhile, the proof-based consensus algorithms are heavily dependent on dedicated hardware for solving the difficult puzzle or using TEE in PoET.

In the voting-based consensus algorithms, nodes know the global information and exchange messages easily so that  the distributed ledger can be appended by not only the miner but almost all nodes through voting. There are two typical classes of voting-based consensus, i.e., the practical Byzantine fault tolerance (PBFT) \cite{Castro99} and DPoS \cite{Larimer14}. 
PBFT \cite{Castro99} is a consensus algorithm used to solve a classic problem called Byzantine Generals (BG) problem, by which $3f+1$ nodes can achieve consensus in the presence of up to $f$ malicious nodes.
It is commonly employed in the permissioned blockchain platform, where the number of nodes is usually limited to 20 as the overhead of messaging increases significantly when the number of replicas increases \cite{Baliga17}.
Based on this, many variants were proposed to improve the consensus performance further.
Raft \cite{Ongaro14} improves PBFT to allow
$f$ malicious nodes in a network of $2f+1$ nodes for achieving consensus.
Tendermint \cite{kwon2014tendermint} optimizes PBFT with the reduced number of voting rounds.
The ripple protocol consensus algorithm (RPCA) \cite{schwartz2014ripple} used in XRP \cite{chase2018analysis} mitigates the requirement of synchronous communication through achieving the federated Byzantine agreement (FBA) in collectively-trusted subnetworks. And Stellar \cite{mazieres2015stellar} refines the implementation of FBA in a provably-safe way.
On the other hand, DPoS \cite{Larimer14,Nguyen18} uses the stake as a proof for voting rather than getting the chance to mine a new block in blockchain, which can be regarded as a combination of voting-based and proof-based consensus.

\section{Conclusion}\label{sec:conclusion}
In this paper, we propose an uncertainty- and collusion-proof voting consensus mechanism, which involves the selection pressure-based voting consensus algorithm to deter wrong elections and adopts an incentive compatible scoring rule  for evaluating the trustworthiness of voting so as to avoid bribery elections.
In addition, we take advantage of the large deviation theory to theoretically analyze the proposed voting consensus mechanism, based on which we  can draw the following conclusions: 1) the growth rate of the voting failure rate can fall off with the increase of  the selection valve more sensitively than with that of the selection pressure; 2)
 the effective selection valve  decreases logarithmically with the increase of the voting failure tolerance degree  and the decrease of the selection pressure; 3) the increase of either the  voting failure tolerance degree  or the selection valve  will lead to the drop of  the effective expectation of  merit, which will remain stable after  declining  to a certain level. Extensive simulations verify the effectiveness of the proposed
 voting consensus mechanism.

\bibliography{reference}

\begin{thebibliography}{10}
\providecommand{\url}[1]{#1}
\csname url@samestyle\endcsname
\providecommand{\newblock}{\relax}
\providecommand{\bibinfo}[2]{#2}
\providecommand{\BIBentrySTDinterwordspacing}{\spaceskip=0pt\relax}
\providecommand{\BIBentryALTinterwordstretchfactor}{4}
\providecommand{\BIBentryALTinterwordspacing}{\spaceskip=\fontdimen2\font plus
\BIBentryALTinterwordstretchfactor\fontdimen3\font minus
  \fontdimen4\font\relax}
\providecommand{\BIBforeignlanguage}[2]{{%
\expandafter\ifx\csname l@#1\endcsname\relax
\typeout{** WARNING: IEEEtran.bst: No hyphenation pattern has been}%
\typeout{** loaded for the language `#1'. Using the pattern for}%
\typeout{** the default language instead.}%
\else
\language=\csname l@#1\endcsname
\fi
#2}}
\providecommand{\BIBdecl}{\relax}
\BIBdecl

\bibitem{Markets17}
MarketsandMarkets, ``Blockchain market by provider, application (payments,
  exchanges, smart contracts, documentation, digital identity, supply chain
  management, and grc management), organization size, industry vertical, and
  region - global forecast to 2022,'' 2017, [Online]. Available:
  \url{https://www.marketsandmarkets.com/Market-Reports/blockchain-technology-market-90100890.html}.

\bibitem{Nakamoto08}
S.~Nakamoto \emph{et~al.}, ``Bitcoin: A peer-to-peer electronic cash system,''
  2008.

\bibitem{King12}
S.~King and S.~Nadal, ``Ppcoin: Peer-to-peer crypto-currency with
  proof-of-stake,'' \emph{self-published paper, August}, vol.~19, 2012.

\bibitem{Larimer14}
D.~Larimer, ``Delegated proof-of-stake (dpos),'' \emph{Bitshare whitepaper},
  2014.

\bibitem{ganesh2004big}
A.~J. Ganesh, N.~O'Connell, and D.~J. Wischik, \emph{Big queues}.\hskip 1em
  plus 0.5em minus 0.4em\relax Springer, 2004.

\bibitem{Du18}
J.~Du, E.~Gelenbe, C.~Jiang, H.~Zhang, Y.~Ren, and H.~V. Poor, ``Peer
  prediction-based trustworthiness evaluation and trustworthy service rating in
  social networks,'' \emph{IEEE Transactions on Information Forensics and
  Security}, vol.~14, no.~6, pp. 1582--1594, 2018.

\bibitem{miller2005eliciting}
N.~Miller, P.~Resnick, and R.~Zeckhauser, ``Eliciting informative feedback: The
  peer-prediction method,'' \emph{Management Science}, vol.~51, no.~9, pp.
  1359--1373, 2005.

\bibitem{Bentov14}
I.~Bentov, C.~Lee, A.~Mizrahi, and M.~Rosenfeld, ``Proof of activity: Extending
  bitcoin's proof of work via proof of stake.'' \emph{IACR Cryptology ePrint
  Archive}, vol. 2014, p. 452, 2014.

\bibitem{Burstcoin14}
Burstcoin, 2014, [Online]. Available:
  \url{https://www.burst-coin.org/proof-of-capacity}.

\bibitem{poet17}
Intel, ``Hyperledger sawtooth,'' 2017, [Online]. Available:
  \url{https://sawtooth.hyperledger.org/docs/core/releases/1.0/architecture/poet.html},.

\bibitem{Chen17}
L.~Chen, L.~Xu, N.~Shah, Z.~Gao, Y.~Lu, and W.~Shi, ``On security analysis of
  proof-of-elapsed-time (poet),'' in \emph{International Symposium on
  Stabilization, Safety, and Security of Distributed Systems}.\hskip 1em plus
  0.5em minus 0.4em\relax Springer, 2017, pp. 282--297.

\bibitem{eyal2018majority}
I.~Eyal and E.~G. Sirer, ``Majority is not enough: Bitcoin mining is
  vulnerable,'' \emph{Communications of the ACM}, vol.~61, no.~7, pp. 95--102,
  2018.

\bibitem{karame2012double}
G.~O. Karame, E.~Androulaki, and S.~Capkun, ``Double-spending fast payments in
  bitcoin,'' in \emph{Proceedings of the 2012 ACM conference on Computer and
  communications security}.\hskip 1em plus 0.5em minus 0.4em\relax ACM, 2012,
  pp. 906--917.

\bibitem{hu2019game}
Q.~Hu, S.~Wang, and X.~Cheng, ``A game theoretic analysis on block withholding
  attacks using the zero-determinant strategy,'' in \emph{Proceedings of the
  International Symposium on Quality of Service}.\hskip 1em plus 0.5em minus
  0.4em\relax ACM, 2019, p.~41.

\bibitem{Castro99}
M.~Castro, B.~Liskov \emph{et~al.}, ``Practical byzantine fault tolerance,'' in
  \emph{OSDI}, vol.~99, no. 1999, 1999, pp. 173--186.

\bibitem{Baliga17}
A.~Baliga, ``Understanding blockchain consensus models,'' in \emph{Persistent},
  2017.

\bibitem{Ongaro14}
D.~Ongaro and J.~Ousterhout, ``In search of an understandable consensus
  algorithm,'' in \emph{Proceedings of the 2014 USENIX conference on USENIX
  Annual Technical Conference}.\hskip 1em plus 0.5em minus 0.4em\relax USENIX
  Association, 2014, pp. 305--320.

\bibitem{kwon2014tendermint}
J.~Kwon, ``Tendermint: Consensus without mining,'' \emph{Draft v. 0.6, fall},
  vol.~1, p.~11, 2014.

\bibitem{schwartz2014ripple}
D.~Schwartz, N.~Youngs, A.~Britto \emph{et~al.}, ``The ripple protocol
  consensus algorithm,'' \emph{Ripple Labs Inc White Paper}, vol.~5, p.~8,
  2014.

\bibitem{chase2018analysis}
B.~Chase and E.~MacBrough, ``Analysis of the xrp ledger consensus protocol,''
  \emph{arXiv preprint arXiv:1802.07242}, 2018.

\bibitem{mazieres2015stellar}
D.~Mazieres, ``The stellar consensus protocol: A federated model for
  internet-level consensus,'' \emph{Stellar Development Foundation}, p.~32,
  2015.

\bibitem{Nguyen18}
G.-T. Nguyen and K.~Kim, ``A survey about consensus algorithms used in
  blockchain.'' \emph{Journal of Information processing systems}, vol.~14,
  no.~1, 2018.

\end{thebibliography}
\bibliographystyle{IEEEtran}

\end{document}